\documentclass[10pt,conference]{IEEEtran}

\usepackage{enumitem,verbatim}
\usepackage{amsmath,amsfonts,amssymb,amsthm}
\usepackage{proof,bussproofs}
\usepackage{algorithm}
\usepackage{algpseudocode}
\usepackage{algorithmicx, xcolor}
\usepackage{proof,bussproofs}
\usepackage{fancyvrb}
\usepackage{graphicx}
\usepackage{mathpartir}
\usepackage{hyperref}
\usepackage{listings}
\usepackage{stmaryrd}
\usepackage{dutchcal}
\usepackage{supertabular}
\usepackage{cite}
\usepackage[Symbol]{upgreek}
\usepackage[numbers,sort]{natbib}

\usepackage[T1]{fontenc}
\usepackage{inconsolata}

\newtheorem{theorem}{Theorem}
\newtheorem{lemma}{Lemma}

\theoremstyle{definition}
\newtheorem{definition}{Definition}
\newtheorem{example}{Example}
\newtheorem{rem}{Remark}

\usepackage{iftex}
\ifpdf
  \usepackage{underscore}         
  \usepackage[T1]{fontenc}        
\else
  \usepackage{breakurl}           
\fi

\makeatletter
\def\infrule#1#2#3{\@ifnextchar[{\@infrule{#1}{#2}{#3}}{\@infrule{#1}{#2}{#3}[*]}}%

\def\@infrule#1#2#3[#4]{
 \par\bigbreak
 \vtop{
  \noindent\hangindent2em\hangafter2\leavevmode\null
  \irule{#1:}\kern.5em\textbf{#2}\\[\smallskipamount]
  \ifx*#4 
   \null\quad$#3$
  \else 
   \setbox30=\hbox{\qquad$#3$\qquad\noindent#4}%
   \ifdim\wd30>\hsize
    \null\quad$#3$\par\kern-\parskip\smallskip\noindent\hangindent1em\hangafter0#4
   \else
    \null\quad$#3$\qquad\noindent#4
   \fi
  \fi
 }%
} \makeatother

\makeatletter
\newcommand*\Suppressnumber{%
	\lst@AddToHook{OnNewLine}{%
		\let\thelstnumber\relax%
		\advance\c@lstnumber-\@ne\relax%
	}%
} \makeatother

\makeatletter
\newcommand*\Reactivatenumber{%
	\lst@AddToHook{OnNewLine}{%
		\let\thelstnumber\origthelstnumber%
		\advance\c@lstnumber\@ne\relax}%
} \makeatother

\lstset{
	basicstyle=\ttfamily\small,
	tabsize=2,
	showstringspaces=true,
	escapeinside=||,
}

\newcommand{\N}{\mathbb{N}}
\newcommand{\s}{\mathcal{l}}
\newcommand{\q}{\mathcal{r}}
\newcommand{\w}{\mathcal{u}}
\newcommand{\x}{\chi}

\newcommand{\unau}{\mathsf{VNAU}}
\newcommand{\unaur}{\unau_{\R}}

\newcommand{\permef}{{\cdot}}
\newcommand{\permid}{\mathit{Id}}
\newcommand{\dotcup}{\,\mathaccent\cdot\cup\,}
\newcommand{\Lra}{\Longrightarrow}
\newcommand{\spsys}{;\;\allowbreak}
\newcommand{\supp}{\mathsf{supp}}
\newcommand{\head}[1]{\mathsf{head}(#1)}
\newcommand{\substils}[1]{\sigma_L(#1)}
\newcommand{\substirs}[1]{\sigma_R(#1)}
\newcommand{\len}[1]{|#1|}
\newcommand{\R}{\mathcal{R}}
\newcommand{\Rlcs}{\R_{\textsf{lcs}}}
\newcommand{\hseq}{\mathcal{H}}
\newcommand{\alignsym}{\mathcal{h}}
\newcommand{\alignpos}[2]{[#1, #2]}
\newcommand{\swap}[2]{(#1\: #2)}
\newcommand{\np}[2]{\langle #1, \allowbreak #2 \rangle}
\newcommand{\dom}{\mathsf{Dom}}
\newcommand{\ran}{\mathsf{Ran}}
\newcommand{\atoms}{\mathsf{Atoms}}
\newcommand{\ds}{\mathsf{ds}}
\newcommand{\id}{id}
\newcommand{\arule}[1]{(\irule{#1})}
\newcommand{\irule}[1]{{\tt #1}}
\newcommand{\incmt}{{++}}
\newcommand{\decmt}{{--}}

\allowdisplaybreaks
\begin{document}

\title{Generalization of Variadic Structures with Binders: A Tool for Structural Code Comparison}

\author{
	\IEEEauthorblockN{{Alexander Baumgartner}}
	\IEEEauthorblockA{{\textit{Instituto de Ciencias de la Ingenier\'{i}a}} \\
		{\textit{Universidad de O'Higgins}}\\
		{Rancagua, Chile} \\
		{ORCID 0000-0002-4757-5907}}
	\and
	\IEEEauthorblockN{{Temur Kutsia}}
	\IEEEauthorblockA{{\textit{Research Institute for Symbolic Computation}} \\
		{\textit{Johannes Kepler University}}\\
		{Linz, Austria} \\
		{ORCID 0000-0003-4084-7380}}
}

\maketitle

\begin{abstract}
This paper introduces a novel anti-unification algorithm for the generalization of variadic structures with binders, designed as a flexible tool for structural code comparison. By combining nominal techniques for handling variable binding with support for variadic expressions (common in abstract syntax trees and programming languages), the approach addresses key challenges such as overemphasis on bound variable names and difficulty handling insertions or deletions in code fragments. The algorithm distinguishes between atoms and two kinds of variables (term and hedge variables) to compute best generalizations that maximally preserve structural similarities while abstracting systematic differences. It also provides detailed information to reconstruct original expressions and quantify structural differences. This information can be useful in tasks like code clone detection, refactoring, and program analysis. By introducing a parametrizable rigidity function, the technique offers fine-grained control over similarity criteria and reduces nondeterminism, enabling flexible adaptation to practical scenarios where trivial similarities should be discounted. Although demonstrated primarily in the context of code similarity detection, this framework is broadly applicable wherever precise comparison of variadic and binder-rich representations is required.\end{abstract}

\begin{IEEEkeywords}
Variadic nominal terms, anti-unification, generalization, code clone detection,
program analysis
\end{IEEEkeywords}

\section{Introduction}

A generalization of two formal expressions $t_1$ and $t_2$ is an expression $r$ that maximally preserves similarities between $t_1$ and $t_2$, while abstracting over their differences in a systematic way. Anti-unification \cite{Plotkin70,Reynolds70,DBLP:conf/ijcai/CernaK23} is a logic-based technique used to compute generalizations. Usually, its output includes not only the generalized expression $r$, but also a description of how $t_1$ and $t_2$ differ, along with mappings that reconstruct each original expression from the generalization. The notions of similarity and difference can be parameterized and quantified, which makes anti-unification applicable in various areas of computer science, artificial intelligence, computational linguistics, etc. 

One notable application area is software science, where anti-unification techniques are widely used in various automated programming and program analysis tasks. Recent examples include bottom-up synthesis of programs with local variables~\cite{DBLP:journals/pacmpl/LiZDZW24}, dynamic analysis of machine learning programs~\cite{DBLP:conf/icse/ZhengS24}, synthesis of modified-privilege access control policies~\cite{DBLP:journals/pacmpl/DAntoniDGRRS24}, learning-based software vulnerability analysis~\cite{DBLP:conf/icse/NongFYZL0C24}, and metaprogram learning~\cite{DBLP:journals/nature/RulePCEKT24}. In program analysis, anti-unification has been successfully integrated into tools for automated program repair such as Getafix~\cite{DBLP:journals/pacmpl/BaderSPC19}, Rex~\cite{DBLP:conf/nsdi/MehtaBKMBABABK20}, Fixie~\cite{DBLP:conf/sigsoft/WinterNBCHHWKWM22}, and Revisar~\cite{DBLP:conf/sbes/SousaSGBD21}, as well as for software code clone detection and refactoring~\cite{DBLP:conf/iwsc/YernauxV22,DBLP:conf/ershov/BulychevKZ09,DBLP:journals/programming/ThompsonLS17}.

Within these applications, anti-unification techniques are typically limited to relatively simple variants, which have difficulties in identifying similarities between code fragments that contain non-trivial differences, such as insertions or deletions. Furthermore, when assessing similarities between code pieces, differences in the names of local (bound) variables are often overemphasized, receiving greater weight than is warranted.

To address these and related challenges, we propose a new anti-unification technique that simultaneously handles binders and variadic expressions, combining the strengths of both approaches. Concepts such as variable binding, scope, renaming, capture-avoiding substitution, and fresh symbol generation are fundamental in code analysis and transformation. To work with these reliably, we adopt a mathematically rigorous foundation based on nominal techniques \cite{gabbay,DBLP:conf/lics/GabbayP99}. Variadic expressions, which allow a variable number of arguments, are playing an increasingly significant role in software science across areas such as language design, formal reasoning, metaprogramming, and practical software engineering. For example, they enable flexible APIs, they are used in formats like XML/HTML and abstract syntax trees with arbitrary numbers of child nodes, and they are helpful in macros and templates for writing reusable and composable code structures.
Moreover, program analysis tools, including clone detectors, refactoring engines, and symbolic execution frameworks, can greatly benefit from variadic representations, as they must analyze and transform code structures with arbitrary arity in real-world software.

Formal symbolic techniques, in particular generalization, have been studied for nominal and variadic expressions, but mostly in isolation \cite{DBLP:conf/fscd/Schmidt-Schauss22,DBLP:conf/unif/BaumgartnerN20,DBLP:conf/rta/BaumgartnerKLV15,DBLP:journals/jar/KutsiaLV14,DBLP:journals/iandc/BaumgartnerK17,DBLP:conf/jelia/BaumgartnerK14,RISC5180,DBLP:conf/ilp/YamamotoIIA01}. Their combination, however, remains largely unexplored, with the exception of \cite{DBLP:conf/tbillc/DunduaKR19}, which addresses the unification problem in this setting.

The algorithm presented in this paper operates on \emph{variadic terms} (trees) that may include \emph{binders}. It makes a syntactic distinction between \emph{atoms}, which can be bound, and \emph{variables}, which are subject to substitution.
\footnote{In the context of program analysis, atoms may represent program entities such as local variables or function arguments, while variables serve as meta-variables used to abstract over code fragments.}
That is, a combination of variadic and nominal terms, called \emph{variadic nominal terms}, is being considered.
To fully exploit variadicity, we distinguish between two kinds of variables: \emph{individual variables}, which can be substituted by an individual term, and \emph{hedge variables}, which can be substituted by a finite, possibly empty, sequence of terms, referred to as a \emph{hedge}.
The algorithm computes least general generalizations for such variadic nominal terms: these generalizations are more general than the input expressions, and no strictly more specific generalization exists.

In addition to computing generalizations, the algorithm also captures information about how the original expressions differ and how they can be reconstructed from the generalization. This difference information is particularly useful for applications that aim to quantify the similarity or dissimilarity between structures.

To provide finer control over what is considered meaningful similarity during generalization (and to reduce nondeterminism), the algorithm is parametrized by a special function called the rigidity function. At each level of the term-trees being generalized, this function determines which nodes should be retained in the generalization. By varying the choice of rigidity function, we can obtain generalizations of differing granularity or precision. This flexibility is particularly useful in applications such as code clone detection, where it may be desirable to exclude trivial or insignificant similarities from being classified as clones.

Although our illustrative examples focus on code clone detection, we emphasize that the applicability of the technique developed in this paper extends well beyond this domain. The method provides a flexible and general-purpose framework for characterizing similarities and differences between formal structures with binders and variadic arity. As such, it can be applied in a wide range of contexts, including program analysis, automated refactoring, symbolic reasoning, and even in linguistic or mathematical domains where structural comparison and abstraction over variable binding and variadic constructs are essential.

The paper is organized as follows. Section~\ref{sect:example} presents a running example related to software code clone detection. Section~\ref{sec:preliminaries} introduces the necessary notation and terminology, and also demonstrates how the example from Section~\ref{sect:example} can be modeled using variadic nominal syntax. In Section~\ref{sec:au-general}, we present a general algorithm for variadic nominal anti-unification, prove its termination, soundness, and completeness, and discuss its strengths and limitations. Section~\ref{sec:au-rigid} refines this algorithm into a more efficient and flexible version by introducing a rigidity function as a parameter. An additional extension of the algorithm is presented in Section~\ref{sec:au-x}. Section~\ref{sect:future} outlines a possible direction for future work, and Section~\ref{sec:conc} concludes the paper.

\section{Running Example: Detecting Code clones}
\label{sect:example}

Software code clones are similar pieces of code, usually obtained by reusing: copying and pasting with or without minor adaptation. They have been studied from different perspectives, e.g., in
\cite{DBLP:journals/scp/RoyCK09,DBLP:journals/spe/Yang91,DBLP:journals/sqj/EvansFM09,DBLP:conf/icsm/BaxterYMSB98,DBLP:conf/wcre/KoschkeFF06,DBLP:conf/ershov/BulychevKZ09,DBLP:conf/padl/LiT10}.
Clones are classified into the following categories:
\begin{description}
	\item[Type-1:] Identical code fragments except for variations in whitespace, layout and comments.
	\item[Type-2:] Syntactically identical fragments except for variations in identifiers, literals, types, whitespace, layout and
	comments.
	\item[Type-3:] Copied fragments with further modifications such as changed, added or removed statements, in addition to
	variations of type-2.
	\item[Type-4:] Two or more code fragments that perform the same computation but are implemented by different syntactic
	variants.
\end{description}
An illustrative comparison of the clone types is given in \cite{DBLP:journals/scp/RoyCK09}. The example shown in Figure~\ref{fig:sw-clone-intro} is taken from there with a minor modification. It shows a piece of code and two of its clones, one of type-2 and one of type-3. Below we will show how this example can be encoded in out syntax and illustrate the use of anti-unifucation to analyze similarity of the provided fragments.

\begin{figure}[ht!]
\setlength{\tabcolsep}{0pt}
\centering
\begin{tabular}{c}
\begin{lstlisting}
void sumProd(int n) {
	float sum = 0.0;
	float prod = 1.0;
	for(int i=1; i<=n; i++) {
		sum = sum + i;
		prod = prod * i;
		foo(sum, prod); }}

void sumProd(int |\bfseries a|) {
	|\bfseries double s| = 0.0;
	|\bfseries double p| = 1.0;
	for(int |\bfseries j\,|=1; |\bfseries j\,|<=|\bfseries a\,|; |\bfseries j\,|++) {
		|\bfseries s\,| = |\bfseries s\,| + |\bfseries j\,|;
		|\bfseries p\,| = |\bfseries p\,| * |\bfseries j\,|;
		foo(|\bfseries s\,|, |\bfseries p\,|); }}

 void sumProd(int |\bfseries a|) {
	 |\bfseries double s| = 0.0;
	 |\bfseries double p| = 1.0;
	 for(int |\bfseries j\,|=1; |\bfseries j\,|<=|\bfseries a\,|; |\bfseries j\,|++) {
		 |\bfseries s\,| = |\bfseries s\,| + |\bfseries j\,|;
		 |\bfseries// line deleted|
		 foo(|\bfseries j|, |\bfseries s\,|, |\bfseries p\,|, |\bfseries a|); }}
\end{lstlisting}
\end{tabular}

\caption{Original code, type-2 clone, and type-3 clone.}
\label{fig:sw-clone-intro}
\end{figure}

\section{Notions, Notation, Terminology}\label{sec:preliminaries}

This section defines important notions that are used throughout the work. For more details we refer the reader to related work, e.g., \cite{DBLP:journals/tocl/LevyV12,DBLP:journals/jar/KutsiaLV14,DBLP:conf/rta/BaumgartnerKLV15,DBLP:conf/tbillc/DunduaKR19}. 

The signature of variadic nominal terms considers four pairwise disjoint sets:\footnote{For simplicity, we consider here the unsorted case.} a countable infinite set of \emph{atoms} ${\pmb A}=\{a,b,c,\ldots\}$, a countable set of \emph{variadic function symbols} ${\pmb \Sigma}=\{f,g,h,\ldots\}$, a countable infinite set of  \emph{individual variables} ${\pmb x}=\{x,y,z,\ldots\}$, and a countable infinite set of  \emph{hedge variables} ${\pmb X}=\{X,Y,Z,\ldots\}$. 
Like in the case of (ranked) nominal terms, we consider \emph{permutations} of atoms, say  $\pi$, with finite support $\supp(\pi) := \{a \in \pmb{A} \mid  \pi(a) \neq  a\}$.
The inverse of $\pi$ is denoted by $\pi^{-1}$, and {$\permid$} is the identity permutation (i.e., $\supp(\permid) = \emptyset$).
Permutations are represented by a finite sequence of swappings, e.g., $\swap{a}{b}\swap{a}{c}$. By doing so, permutation composition simply reduces to sequence concatenation, and the inverse corresponds to reversing the sequence.

\emph{Variadic nominal terms} (terms, for short) are either individual terms $t$ or hedges (finite sequences) $\tilde s$, given by the grammar:
\begin{align*}
t::= & \; a \mid \ \pi \permef x \mid \ a.t \mid  \ f(\tilde s),
\qquad
r::= \pi \permef X \mid \ t,
\\
\tilde s::= & \; r_1, \dots, r_n, \quad n\geq 0,
\end{align*}
where $a$ is an \emph{atom}, $\pi\permef x$ is an \emph{individual suspension}, $a.t$ denotes the \emph{abstraction} of atom $a$ in the individual term~$t$, $f(\tilde s)$~is a \emph{function application} of the variadic function $f$ to a (possibly empty) hedge of arguments~$\tilde s$, and $\pi \permef X$ is a \emph{hedge suspension}. A singleton hedge that consists of an individual term $t$ is considered the same as $t$ itself. Hedges are assumed to be flat, i.e., there is no difference between $\tilde{s_1},(\tilde{s_2}),\tilde{s_3}$ and $\tilde{s_1},\tilde{s_2},\tilde{s_3}$. The empty hedge is denoted by $\varepsilon$. We use $\s,\q,\w$ to emphasize that we are talking about either type of term, i.e., a hedge or an individual term. The notion \emph{variable} refers to either a hedge or an individual variable, and $\x$ denotes some variable from $\pmb x\cup\pmb{X}$. \emph{Suspension} simply refers to either an individual or a hedge suspension.

\begin{example}
    Figure \ref{fig:sw-clone-hedges} illustrates the representation of code piece from Figure~\ref{fig:sw-clone-intro} as an abstract syntax tree in the language of variadic nominal terms. In particular, it shows the translation of the original code and the type-3 clone. Variable names become atoms that are bounded by the enclosing block statement.

\begin{figure}[ht!]\centering
\begin{tabular}{l}
	$n.sum.prod.$sumProd(input(type(int), $n$), returnType(void),\\
	\qquad  {=}(type(float), $sum$, 0.0),\\
	\qquad  {=}(type(float), $prod$, 1.0),\\
	\qquad  $i.$for({=}(type(int),$i$,1), {$\leq$}($i,n$), \incmt($i$),\\
	\qquad\qquad {=}($sum$, +($sum,i$)),\\
	\qquad\qquad {=}($prod$, *($prod,i$)),\\
	\qquad\qquad foo($sum,prod$)))
\end{tabular}

\medskip
\begin{tabular}{l}
	$a.s.p.$sumProd(input(type(int), $a$), returnType(void),\\
	\qquad  {=}(type(double), $s$, 0.0),\\
	\qquad  {=}(type(double), $p$, 1.0),\\
	\qquad  $j.$for({=}(type(int),$j$,1), {$\leq$}($j,a$), \incmt($j$),\\
	\qquad\qquad {=}($s$, +($s,j$)),\\
    \qquad\qquad {=}($p$, *($p,i$)),\\
	\qquad\qquad foo($s,p$)))
\end{tabular}

\medskip
\begin{tabular}{l}
	$a.s.p.$sumProd(input(type(int), $a$), returnType(void),\\
	\qquad  {=}(type(double), $s$, 0.0),\\
	\qquad  {=}(type(double), $p$, 1.0),\\
	\qquad  $j.$for({=}(type(int),$j$,1), {$\leq$}($j,a$), \incmt($j$),\\
	\qquad\qquad {=}($s$, +($s,j$)),\\
	\qquad\qquad foo($j,s,p,a$)))
\end{tabular}
\caption{The original code and its type-2 and type-3 clones as variadic nominal terms.
\label{fig:sw-clone-hedges}
}
\end{figure}

\end{example}

The \emph{effect of a swapping} over an atom is defined by $\swap{a}{b} \permef a = b$, $\swap{a}{b}\permef b = a$ and $\swap{a}{b}\permef c = c$, when $c \notin \{a, b\}$. It is extended to the rest of terms: $\swap{a}{b}\permef (\pi \permef \x) = \swap{a}{b}\pi \permef \x$ (where $\swap{a}{b}\pi$ is the composition of $\pi$ and $\swap{a}{b}$), $\swap{a}{b}\permef (c.t) = \left ( \swap{a}{b}\permef c \right).\left (\swap{a}{b} \permef t \right )$, $\swap{a}{b}\permef f(\tilde s) = f(\swap{a}{b}\permef \tilde s)$, and $\swap{a}{b}\permef (r_1,\ldots,r_n) = (\swap{a}{b}\permef r_1,\ldots, \swap{a}{b}\permef r_n)$. 
The \emph{effect of a permutation} is defined by $\swap{a_1}{b_1} \cdots \swap{a_n}{b_n}\permef t = \swap{a_1}{b_1}\permef \left (\swap{a_2}{b_2} \cdots \swap{a_n}{b_n}\permef t\right )$. The effect of the empty permutation is $\permid \permef \s = \s$. We extend it to suspensions and write $\x$ as the shortcut of $\permid\permef \x$.

The $i$th element of a hedge $\tilde s$ is denoted by $\tilde s\vert_i$, and $\tilde s\vert_i^j$ denotes the subhedge (i.e., substring hedge) from the $i$th element to the $j$th element, both included. The case $j<i$ corresponds to the empty hedge $\varepsilon$. The \emph{length} of a hedge $\tilde s$ is the number of elements in it, written as $\len{\tilde s}$. Individual terms and hedge suspensions are treated as singleton hedges, i.e., the above notions naturally extend to them.
The \emph{head} of an individual term or a hedge suspension $r$, denoted by $\head{r}$, is defined as $\head{a}=a$, $\head{\pi\permef\nobreak \x}=\x$, $\head{a.t} = .$, and $\head{f(\tilde s)}=f$.\footnote{It is assumed that some special characters, e.g., "\textvisiblespace.,()$\cdot\bullet$", do not occur in the signature.}

{\em Substitutions}, denoted by $\sigma, \gamma, \ldots$, are finite mappings from variables to terms, under the condition that individual variables are mapped to individual terms. They are defined by a finite set of \emph{assignments} of the form $\{\x_1\mapsto\w_1,\dots \x_n\mapsto\w_n\}$.
Given a substitution $\sigma=\{\x_1\mapsto\w_1,\dots \x_n\mapsto\w_n\}$, the \emph{domain} of $\sigma$ is the finite set $\dom(\sigma):=\{\x_1,\dots,\x_n\}$, and the \emph{range} is $\ran(\sigma):=\{\w_1,\dots,\w_n\}$.
For all variables $\x\in ({\pmb x}\cup{\pmb X})\setminus\dom(\sigma)$, the mapping is extended as being the identity suspension $\permid\permef\x$.
The variables in $\dom(\sigma)$ are also said to be {\em instantiated} by $\sigma$.
The identity substitution with the empty domain is denoted by $\id$.
The {\em effect of a substitution} $\sigma$ on a term $\w$, denoted as~$\w\sigma$, is inductively defined by
\begin{gather*}
a\sigma = a,\qquad
(\pi \permef \x)\sigma = \pi \permef \x\sigma,\qquad
(a.t)\sigma = a.(t\sigma),
\\
f(\tilde{s})\sigma = \tilde{s}\sigma,\qquad
(r_1,\ldots, r_n)\sigma = r_1\sigma,\ldots, r_n\sigma.
\end{gather*}

\emph{Freshness constraints} are pairs of the form $a\#\x$ ($a$ is fresh for $\x$), forbidding the free occurrence of some atom~$a$ in instantiations of some variable~$\x$. A \emph{freshness context} (denoted by $\nabla, \Gamma,\dots$) is a finite set of freshness constraints. Given a substitution $\sigma$ and a freshness context $\nabla=\{a_1\#\x_1,\dots,a_n\#\x_n\}$, we say that $\sigma$ respects $\nabla$
iff $\nabla\vdash a_i\#\x_i\sigma$, for all $1\leq i\leq n$, where $\nabla\vdash a\# \s$ is defined as follows:
{
\begin{prooftree}
\AxiomC{$a\neq b$}
\RightLabel{\arule{\#atom}}
\UnaryInfC{$\nabla \vdash a\# b$}
\DisplayProof
\hspace{0.5cm}
\AxiomC{ $\nabla\vdash a\# \tilde s$}
\RightLabel{\arule{\#appl}}
\UnaryInfC{ $ \nabla \vdash f(\tilde s)$}
\end{prooftree}
\begin{prooftree}
\AxiomC{$ \nabla \vdash a\#r_1 \quad \cdots \quad \nabla \vdash a\#r_n$}
\RightLabel{\arule{\#hedge}}
\UnaryInfC{$ \nabla \vdash a\#(r_1,\ldots, r_n)$}
\end{prooftree}
\begin{prooftree}
\AxiomC{\phantom{X}}
\RightLabel{\arule{\#abs1}}
\UnaryInfC{$ \nabla \vdash a\# a.t$}
\DisplayProof
\hspace{0.5cm}
\AxiomC{ $a\neq b \qquad \nabla\vdash a\# t$}
\RightLabel{\arule{\#abs2}}
\UnaryInfC{ $ \nabla \vdash a\#b.t$}
\end{prooftree}
\begin{prooftree}
\AxiomC{$ (\pi^{-1}\permef a\# \x)\in \nabla$}
\RightLabel{\arule{\#susp}}
\UnaryInfC{ $ \nabla \vdash a\#\pi\permef \x$}
\end{prooftree}
}
Given a freshness context $\nabla$ and a substitution $\sigma$ that respects $\nabla$, the {\em instance of $\nabla$ under $\sigma$}, written~$\nabla\sigma$, is the smallest freshness context $\Gamma$ such that $\Gamma\vdash a\#\x\sigma$ for all $a\# \x \in \nabla$. Note that $\Gamma$ can easily be obtained by the above rules.

The \emph{difference set} of two permutations $\pi$ and $\pi'$ is defined as $\ds(\pi, \pi') = \{a\in {\pmb A} \mid \pi\permef a \neq \pi'\permef a\}$.
Intuitively, the \emph{equivalence} relation $\approx$, defined by the rules below, states that two terms or hedges are considered equal if they only differ in bound atom names. Permutations are used to rename atoms.
{
\begin{mathpar}
\inferrule*[RIGHT=\arule{\approx\!atom}]{\quad }{ \nabla\vdash a\approx a}  %
\and
\inferrule*[RIGHT=\arule{\approx\!appl}]{ \nabla \vdash \tilde s \approx \tilde q}{ \nabla \vdash f(\tilde s)\approx f(\tilde q)}
\and
\inferrule*[RIGHT=\arule{\approx\!hedge}]{ \nabla \vdash r_1\approx r'_1 \quad \cdots \quad \nabla \vdash r_n\approx r'_n}{ \nabla \vdash r_1,\ldots r_n\approx r'_1,\ldots,r'_n}
\and
\inferrule*[RIGHT=\arule{\approx\!abs1}]{\nabla \vdash t\approx t'}{\nabla \vdash a.t \approx a.t'}
\and
\inferrule*[RIGHT=\arule{\approx\!abs2}]{\nabla \vdash t\approx \swap{a}{b}\permef t' \qquad \nabla \vdash a\# b.t'}{\nabla \vdash a.t \approx b.t'}
\and
\inferrule*[RIGHT=\arule{\approx\!susp}]{a\#\x \in \nabla \text{ for all }a\in\ds(\pi,\pi')}{\nabla \vdash \pi\permef \x \approx \pi' \permef \x} %
\end{mathpar}
}

A \emph{term-in-context} is a pair of a freshness context $\nabla$ and a term $\w$, written $\np{\nabla}{\w}$. We use $T,U,G$ to denote terms-in-context.
\begin{definition}
\label{def:more-general-relation}
A term-in-context $\np{\nabla_1}{\w_1}$ is {\em more general} than a term-in-context $\np{\nabla_2}{\w_2}$, written $\np{\nabla_1}{\w_1}\preceq\np{\nabla_2}{\w_2}$, if there exists a substitution $\sigma$ that respects $\nabla_1$, such that $\nabla_1\sigma\subseteq \nabla_2$ and $\nabla_2\vdash \w_1\sigma\approx \w_2$.
$\np{\nabla_1}{\w_1}$ and $\np{\nabla_2}{\w_2}$ are {\em equi-general}, written $\np{\nabla_1}{\w_1}\simeq \np{\nabla_2}{\w_2}$ iff $\np{\nabla_1}{\w_1}\preceq\np{\nabla_2}{\w_2}$ and $\np{\nabla_2}{\w_2}\preceq\np{\nabla_1}{\w_1}$. As usual, $\prec$ denotes $\preceq\setminus\simeq$.
\end{definition}

\begin{definition}
A term-in-context $G$ is a \emph{generalization} of the terms-in-context $T$ and $U$ if it is more general than both of them, i.e.,  $G\preceq T$ and $G\preceq U$. $G$ is a {\em least general generalization} ($lgg$) of $T$ and $U$ if there exists no other generalization $G'$ of $T$ and $U$ such that $G\prec G'$.
\end{definition}

For instance, consider the terms-in-context $T=\np{\emptyset}{f(a,\allowbreak b,a,b)}$ and $U=\np{\emptyset}{f(c,c)}$. The term-in-context $\np{\nabla}{t}$ where $\nabla=\{b\# x, a\#X, c\# X\}$ and $t=f(x,X,x,X)$ is more general than $T$, since $\sigma=\{x\mapsto a, X\mapsto b\}$ respects $\nabla$, $\nabla\sigma\subseteq\emptyset$, and $t\sigma = f(a,b,a,b)$. It is also more general than $U$, since $\sigma'=\{x\mapsto c, X\mapsto \varepsilon\}$ respects $\nabla$, $\nabla\sigma'\subseteq\emptyset$, and $t\sigma' = f(c,c)$. Therefore, $\np{\nabla}{t}$ is a generalization of $T$ and $U$.
Note that, $\np{\nabla}{f(X,X)}$ and $\np{\{d\# X\}\cup\nabla}{t}$ are also generalizations of $T$ and $U$. Moreover, $\np{\nabla}{f(X,X)}\prec\np{\nabla}{t}\prec\np{\{d\# X\}\cup\nabla}{t}$.
Observe that, by simply adding a freshness constraint, a strictly less general term-in-context can be obtained. This is a crucial observation that has been discussed in~\cite{DBLP:conf/rta/BaumgartnerKLV15}. It leads to $\preceq$ not being well-founded, i.e., no lgg exists.
Therefore, it has been suggested to consider a \emph{finite set} of atoms $A\subset \pmb{A}$ in the generalizations.

The set of atoms that occur in a suspension $\pi\permef\x$ is defined as
$\supp(\pi)$. 
We use $\atoms(\w)$, $\atoms(\nabla)$, $\atoms(\np{\nabla}{\w})$ and $\atoms(\sigma)$ to denote the set of all atoms that occur in a term $\w$,
a freshness context $\nabla$, a term-in-context $\np{\nabla}{\w}$ and in $\ran(\sigma)$, respectively.
A term $\w$, a freshness context $\nabla$, a term-in-context $\np{\nabla}{\w}$ or a substitution $\sigma$ is \emph{based on a set of atoms~$A$} ($A$-based), respectively, iff $\atoms(\w)\subseteq A$, $\atoms(\nabla)\subseteq A$, $\atoms(\np{\nabla}{\w})\subseteq A$ or $\atoms(\sigma)\subseteq\nobreak A$.
A~permutation $\pi$ is $A$-based iff $\supp(\pi)\subseteq A$.
We~consider the anti-unification problem that is parametric in a finite set of atoms $A\subset \pmb{A}$ and everything is $A$-based:\vspace{2mm}
\begin{description}
	\item[Given:]
	Two $A$-based terms-in-context $\np{\nabla}{\w_1}$ and $\np{\nabla}{\w_2}$.
	\item[Find:]
	An $A$-based lgg of $\np{\nabla}{\w_1}$ and $\np{\nabla}{\w_2}$.
\end{description}
\vspace{2mm}
Intuitively, the choice of $A$ should be such that all the structural similarities between $\w_1$ and $\w_2$ can be captured by an $A$-based generalization.
This is guaranteed by selecting $A=A_1\cup A_2$ where $A_1=\atoms(\np{\nabla}{\w_1})\cup\atoms(\np{\nabla}{\w_2})$ and $A_2\subset\pmb{A}\setminus A_1$ is a set of $k$ atoms where $k$ is the minimum amount of abstraction occurrences in $\w_1$ and in $\w_2$. The proof and more details can be found in~\cite{DBLP:conf/rta/BaumgartnerKLV15}.

\section{Variadic Nominal Anti-Unification ($\unau$)}\label{sec:au-general}
The rule-based algorithm $\unau$, described in this section, combines techniques from \cite{DBLP:journals/jar/KutsiaLV14} and \cite{DBLP:conf/rta/BaumgartnerKLV15}.

The rules operate on states of the form $P \spsys S \spsys \Gamma \spsys \sigma$, where $P$ is the problem set, $S$ is the store of already solved problems, $\Gamma$ represents the freshness context computed so far, and $\sigma$ is a substitution that represents the generalization computed so far. $P$ and $S$ are sets of anti-unification problems (AUPs) of the form $\x\colon \s \triangleq \q$, where $\x$ is the generalization variable and $\s,\q$ are hedges or individual terms to be generalized. The notion of being based on a finite set of atoms $A$ (i.e., $A$-based) is naturally extended to AUPs and states, but we won't mention it, if it's clear from the context or irrelevant.

Given two $A$-based terms $\s$, $\q$, and an $A$-based freshness context $\nabla$, we create the \emph{initial state} 
$\{X\colon \s \triangleq \q\}\spsys \emptyset \spsys \emptyset \spsys \id$, where $X$ is a hedge variable, that neither occurs in $\s, \q$, nor in $\nabla$. Then the state transformation rules defined below are being applied exhaustively, in all possible ways. When no more rule is applicable to a state, then a \emph{final state} has been reached. $A$ and $\nabla$ are global parameters and unaffected by rule applications.
A variable is said to be \emph{new} if it neither occurs in the current state nor in a global parameter. $\dotcup$ is the disjoint union operation.

\infrule{{Tri-T}}{Trivial Term Elimination}
{
\{
X\colon a \triangleq a
\} \dotcup P \spsys S \spsys \Gamma \spsys \sigma
 \Lra
P \spsys S \spsys \Gamma \spsys \sigma\{X\mapsto a\}
.}

\infrule{{Tri-H}}{Trivial Hedge Elimination}
{
\{
X\colon \varepsilon \triangleq \varepsilon
\} \dotcup P \spsys S \spsys \Gamma \spsys \sigma
 \Lra
P \spsys S \spsys \Gamma \spsys \sigma\{X\mapsto \varepsilon\}
.}

\infrule{{Dec-T}}{Term Decomposition}
{
\{
X\colon f(\tilde s) \triangleq f(\tilde q)
\} \dotcup P \spsys S \spsys \Gamma \spsys \sigma
 \Lra\\
\{
Y\colon \tilde s \triangleq \tilde q
\} \cup P \spsys S \spsys \Gamma \spsys \sigma\{X\mapsto f(Y)\}
,}
[where $Y$ is a new hedge variable.]

\infrule{{Dec-H}}{Hedge Decomposition}
{
\{
X\colon \tilde s_1,\tilde s_2 \triangleq \tilde q_1,\tilde q_2
\} \dotcup P \spsys S \spsys \Gamma \spsys \sigma
 \Lra\\
\{
Y_1\colon \tilde s_1 \triangleq \tilde q_1,\:\allowbreak
Y_2\colon \tilde s_2 \triangleq \tilde q_2
\} \cup P \spsys S \spsys \Gamma \spsys \sigma\{X\mapsto Y_1,Y_2\}
,}
[where $\len{\tilde s_1, \tilde q_1}=1$ or $\len{\tilde s_1}=\len{\tilde q_1}=1$. Moreover, $\len{\tilde s_2, \tilde q_2}\geq1$ and $Y_1,Y_2$ are new hedge variables.]

\infrule{{Abs-T}}{Term Abstraction}
{
	\{
	X\colon  a.t_1 \triangleq b.t_2
	\}\dotcup P \spsys S \spsys \Gamma \spsys \sigma
	\Lra\\
	\{
	Y\colon \swap{c}{a}\permef t_1 \triangleq \swap{c}{b}\permef t_2
	\}\cup P \spsys S \spsys \Gamma \spsys \sigma\{X\mapsto c.Y\},
}
[where $c\in A$, $Y$ is a new hedge variable, $\nabla \vdash c\# a.t_1$  and $\nabla \vdash c\# b.t_2$.]

\infrule{{Sol-T}}{Term Solving}
{
\{
X\colon r_1 \triangleq r_2
\} \dotcup P \spsys S \spsys \Gamma \spsys \sigma
 \Lra
P \spsys 
\{
X\colon r_1 \triangleq r_2
\} \cup S \spsys \{ a\#X \mid a\in A, \nabla \vdash a\#r_1 \,\wedge\, \nabla \vdash a\#r_2\} \cup \Gamma \spsys \sigma
,}
[where $r_1$ and $r_2$ are suspensions or $\head{r_1}\neq\head{r_2}$.]

\infrule{{Sol-H}}{Hedge Solving}
{
\{
X\colon \tilde s \triangleq \tilde q
\} \dotcup P \spsys S \spsys \Gamma \spsys \sigma
 \Lra
P \spsys 
\{
X\colon \tilde s \triangleq \tilde q
\} \cup S \spsys \{ a\#X \mid a\in A, \nabla \vdash a\#\tilde s \,\wedge\, \nabla \vdash a\#\tilde q\} \cup \Gamma \spsys \sigma
,}
[where $\len{\tilde s, \tilde q}=1$.]

\infrule{{Mer}}{Term or Hedge Merging}
{
P \spsys
\{
\x_1\colon \s_1 \triangleq \q_1,
\x_2\colon \s_2 \triangleq \q_2
\} \dotcup S \spsys \Gamma \spsys \sigma 
 \Lra\\
P \spsys \{{\x_1 : \s_1 \triangleq \q_1}\} \cup S \spsys
\Gamma\{\x_2 \mapsto\pi\permef \x_1\} \spsys \sigma\{\x_2\mapsto \pi\permef \x_1\}
,}
[where $\pi$ is an $A$-based permutation such that $\nabla\vdash \pi\permef \s_1\approx \s_2$ and $\nabla\vdash \pi\permef \q_1\approx \q_2$.]

\infrule{{Nar-T}}{Term Narrowing}
{
P \spsys
\{
X\colon t_1 \triangleq t_2
\} \dotcup S \spsys \Gamma \spsys \sigma
 \Lra\\
P \spsys \{
x\colon t_1 \triangleq t_2
\}\cup S \spsys \Gamma\{X\mapsto x\} \spsys \sigma\{X\mapsto x\}
.}

\medskip
When the transformation process of the initial state $\{X\colon \s \triangleq\nobreak \q\};\, \emptyset ;\, \emptyset ;\, \id$ of some $A$-based $\s$, $\q$ and $\nabla$ leads to a final state $P ;\, S ;\, \Gamma ;\, \sigma$, then the computed $A$-based generalization is $\np{\Gamma}{X\sigma}$. Moreover, $S$ contains all the differences of the input terms $\s$ and $\q$. Note that, for every final state $P=\emptyset$. The rule \irule{Mer} involves deciding if an $A$-based permutation $\pi$ exists such that $\nabla\vdash \pi\permef \s_1\approx \s_2$ and $\nabla\vdash \pi\permef \q_1\approx \q_2$. This decision problem is called the equivariance problem. In~\cite{DBLP:conf/rta/BaumgartnerKLV15} an algorithm is given for solving the equivariance problem. It can be trivially adapted to our setting, since there are no variable instantiations involved.

Since decomposing the hedges involves branching, all the produced generalizations are collected. As usual, some of the computed generalizations may not be lggs. However, below 
we will prove that the computed set of generalizations is complete.
Minimizing the complete set of generalization can be done subsequently, utilizing a matching algorithm.

Now we illustrate the $\unau$ algorithm with an example. For the sake of readability, we omit irrelevant assignments in the computed substitution. In particular, only the assignment that corresponds to the generalization variable $X$ which is used in the initial state will be illustrated.

 \begin{example}\label{ex:simple-two-lggs}
 Let $s=c.f(a,c)$ and $t=b.f(b,c)$ with $A=\{a,b,c\}$ and $\nabla=\emptyset$. The algorithm generates eight branches 
 from which the following two lead to $A$-based lggs:
\begin{alignat*}{4}
\text{Branch 1:} 
    & \hspace{4pt} & & \{X \colon c.f(a,c) \triangleq b.f(b,c) \} \spsys \emptyset \spsys \emptyset \spsys \id \\
\;\Lra_{\text{\sf Abs-T}}
    & \hspace{4pt} & & \{Y \colon f(a,b) \triangleq f(b,c) \} \spsys \emptyset \spsys \emptyset \spsys \{X\mapsto b.Y \} \\
\;\Lra_{\text{\sf Dec-T}}
    & \hspace{4pt} & & \{Z \colon (a,b) \triangleq (b,c) \} \spsys \emptyset \spsys \emptyset \spsys \{X \mapsto b.f(Z), \ldots\}  \\
\;\Lra_{\text{\sf Dec-H}}
    & \hspace{4pt} & & \{V \colon a \triangleq \varepsilon, \, V' \colon b \triangleq (b,c) \} \spsys \emptyset \spsys \emptyset \spsys 
    \\ & \hspace{4pt} & & \{ X\mapsto b.f(V, V'), \ldots\} \\
\;\Lra_{\text{\sf Sol-H}}
    & \hspace{4pt} & & \{ V' \colon b \triangleq (b,c) \} \spsys \{V \colon a \triangleq \varepsilon\} \spsys \{b\#V, c\#V\} \spsys 
    \\ & \hspace{4pt} & & \{ X\mapsto b.f(V,V'), \ldots\} \\
\;\Lra_{\text{\sf{Dec-H}}}
    & \hspace{4pt} & & \{ W'\!\colon b \triangleq b,  W\!\colon \varepsilon \triangleq c \} ; \{V\!\colon a \triangleq \varepsilon\} ; \{b\#V, c\#V\} ;
    \\ & \hspace{4pt} & & \{ X\mapsto b.f(V,W',W), \ldots\} \\
\;\Lra_{\text{\sf{Tri-T}}}
    & \hspace{4pt} & & \{ W \colon \varepsilon \triangleq c \} \spsys \{V \colon a \triangleq \varepsilon\} \spsys \{b\#V, c\#V \} \spsys
    \\ & \hspace{4pt} & & \{ X\mapsto b.f(V,b,W), \ldots\} \\
\;\Lra_{\text{\sf{Sol-H}}}
    & \hspace{4pt} & & \emptyset \spsys \{V \colon a \triangleq \varepsilon, \,W \colon \varepsilon \triangleq c\} \spsys \{b\#V, c\#V, 
    \\ & \hspace{4pt} & & a\#W, b\#W\} \spsys \{ X\mapsto b.f(V,b,W), \ldots\}.\\[1mm]
\text{Branch 2:} 
    & \hspace{4pt} & &\{X \colon c.f(a,c) \triangleq b.f(b,c) \} \spsys \emptyset \spsys \emptyset \spsys \id \\
\;\Lra_{\text{\sf Abs-T}}
    & \hspace{4pt} & &\{Y \colon f(a,b) \triangleq f(b,c) \} \spsys \emptyset \spsys \emptyset \spsys \{X\mapsto b.Y \} \\
\;\Lra_{\text{\sf Dec-T}}
    & \hspace{4pt} & &\{Z \colon (a,b) \triangleq (b,c) \} \spsys \emptyset \spsys \emptyset \spsys \{X \mapsto b.f(Z), \ldots\} \\
\;\Lra_{\text{\sf Dec-H}}
    & \hspace{4pt} & &\{Z_1 \colon a \triangleq b, \, Z_2 \colon b \triangleq c \} \spsys \emptyset \spsys \emptyset \spsys
    \\ & \hspace{4pt} & &\{ X\mapsto b.f(Z_1,Z_2), \ldots\} \\
\;\Lra_{\text{\sf Sol-T}}^2
    & \hspace{4pt} & &\emptyset \spsys \{Z_1 \colon a \triangleq b, \, Z_2 \colon b \triangleq c \} \spsys \{ c\# Z_1, a\# Z_2  \} \spsys
    \\ & \hspace{4pt} & &\{ X\mapsto b.f(Z_1,Z_2), \ldots\} \\
\;\Lra_{\text{\sf{Mer}}}
    & \hspace{4pt} & &\emptyset \spsys \{Z_1 \colon a \triangleq b \} \spsys \{ c\# Z_1 \} \spsys
    \\ & \hspace{4pt} & &\{ X\mapsto b.f(Z_1,\swap{b}{a}\swap{c}{b} \permef Z_1), \ldots \} \\
\;\Lra_{\text{\sf{Nar-T}}}
    & \hspace{4pt} & &\emptyset \spsys \{z \colon a \triangleq b \} \spsys \{ c\# z \} \spsys
    \\ & \hspace{4pt} & &\{ X\mapsto b.f(z,\swap{b}{a}\swap{c}{b}\permef z),  \ldots \}.
\end{alignat*}

\end{example}

\begin{example}
    Let $s=f(a,b,b,a)$ and $t=f(Y, \swap{a}{b} \permef Y)$ with $A=\{a,b\}$ and $\nabla=\emptyset$. Among the final states the algorithm stops with are the following:
    \begin{align*}
    S_1   := {} & \emptyset \spsys \{ Z_1 \colon a \triangleq \varepsilon,\ Z_2 \colon b \triangleq Y \} \spsys \{b \# Z_1\} \spsys 
    \\ & \{ X \mapsto f(Z_1, \swap{a}{b} \permef Z_1, Z_2, \swap{a}{b} \permef Z_2), \ldots \} \\
    S_2   := {} & \emptyset \spsys \{ Z_1 \colon a \triangleq \varepsilon,\ Z_2 \colon b \triangleq Y \} \spsys \{b \# Z_1\} \spsys
    \\ & \{ X \mapsto f(Z_1, Z_2, \swap{a}{b} \permef Z_1, \swap{a}{b} \permef Z_2), \ldots  \} \\
    S_3  := {} & \emptyset \spsys \{ Z_1 \colon a \triangleq \varepsilon,\ Z_2 \colon b \triangleq Y,\ Z_3 \colon b \triangleq \swap{a}{b} \permef Y \} \spsys
    \\ & \{b \# Z_1\} \spsys \{X \mapsto f(Z_1, Z_2, Z_3, Z_1), \ldots  \} \\
    S_4  := {} & \emptyset \spsys \{ Z_1 \colon a \triangleq Y,\ Z_2 \colon b \triangleq \varepsilon\} \spsys \{a \# Z_2\} \spsys
    \\ & \{ X \mapsto f(Z_1, Z_2, \swap{a}{b} \permef Z_1, \swap{a}{b} \permef Z_2),\ldots   \} \\
    S_5  := {} & \emptyset \spsys \{ Z_1 \colon a \triangleq Y,\ Z_2 \colon b \triangleq \varepsilon \} \spsys \{a \# Z_2\} \spsys
    \\ & \{ X \mapsto f(Z_1, \swap{a}{b} \permef Z_1, Z_2, \swap{a}{b} \permef Z_2),\ldots  \} \\
    S_6   := {} & \emptyset \spsys \{ Z_1 \colon a \triangleq Y,\ Z_2 \colon b \triangleq \varepsilon,\ Z_3 \colon a \triangleq \swap{a}{b} \permef Y \} \spsys
    \\ & \{a \# Z_2\} \spsys \{ X \mapsto f(Z_1, Z_2, Z_2, Z_3),\ldots  \} \\
    S_7   := {} & \emptyset \spsys \{ Z_1 \colon a \triangleq \varepsilon,\ Z_2 \colon \varepsilon \triangleq Y \} \spsys \{b \# Z_1\} \spsys
    \\ & \{ X \mapsto f(Z_1, \swap{a}{b} \permef Z_1, Z_2, \swap{a}{b} \permef Z_2, \swap{a}{b} \permef Z_1, Z_1),\ldots  \} 
    \end{align*}
    The terms-in-context obtained from the states $S_1$, $S_2$, $S_4$, $S_5$ and $S_7$ are all pairwise equi-general. $S_3$ and $S_6$ are strictly more general than the others. Moreover, $S_1$, $S_2$, $S_4$, $S_5$ and $S_7$ are also equi-general to the input term-in-context ${T=\np{\emptyset}{f(Y, \swap{a}{b} \permef Y)=t}}$.
    For instance, consider the term-in-context $U=\np{\{a \# Z_2\}}{f(Z_1, Z_2, \swap{a}{b} \permef Z_1, \swap{a}{b} \permef Z_2)=u}$, obtained from $S_4$.
Applying Definition \ref{def:more-general-relation}, we get:
\begin{align*}
	T\preceq U,&\text{ since } \sigma=\{Y\mapsto (Z_1, Z_2) \}\text{ respects } \emptyset
	\\
	&\text{ and } \,\emptyset\sigma\subseteq \{a \# Z_2\}\text{ and }\{a \# Z_2\}\vdash t\sigma\approx u.
	\\
	U\preceq T,&\text{ since } \sigma=\{Z_1\mapsto Y, Z_2\mapsto \varepsilon \}\text{ respects } \{a \# Z_2\}
	\\
	&\text{ and } \,\{a \# Z_2\}\sigma\subseteq \emptyset\text{ and }\emptyset\vdash u\sigma\approx t.
\end{align*}
\end{example}

Now we address important properties of the algorithm: termination, soundness, and completeness.

The complexity measure $\mu: State\rightarrow(\N,\N,\N,\N,\N)$, where $State$ denotes some state $P;\, S;\, \Gamma;\, \sigma$, is defined component-wise in the following manner: 
\begin{enumerate}
	\item Sum of all function symbol occurrences and all abstraction occurrences in all the AUPs of $P$.
	\item Sum of the squared problem lengths defined by $\{ \len{\tilde{s},\tilde{q}}^2 \mid X\colon \tilde s \triangleq \tilde q \in P\}$.
	\item Number of AUPs in $P$.
	\item Number of AUPs in $S$.
	\item Sum of hedge variable occurrences in all the AUPs of $S$.
\end{enumerate}
Measures are ordered lexicographically. The ordering, denoted by $>$, is well-founded.
By the termination theorem, it directly follows that $\unau$ terminates on any input.
\begin{theorem}[Termination]\label{thm:termination}
Let $P;S; \Gamma; \sigma$ be an arbitrary state. There are finitely many possible rule applications $P; S; \Gamma; \sigma\overset{\mathsf{Appl}_i}{\Lra} P_i; S_i; \Gamma_i; \sigma_i, 1\le i\le n$ and $\mu(P; S; \Gamma;\sigma) > \mu(P_i; S_i; \Gamma_i; \sigma_i) $ holds for all $1\leq i\leq n$.
\end{theorem}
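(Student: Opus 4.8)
The plan is to prove the theorem in two parts: first the finite-branching claim, then the strict decrease of $\mu$ under every rule application. For finite branching, I would go rule by rule. Most rules (\irule{Tri-T}, \irule{Tri-H}, \irule{Dec-T}, \irule{Sol-T}, \irule{Sol-H}, \irule{Mer}, \irule{Nar-T}) act deterministically once we fix which AUP of $P$ (or which pair of AUPs of $S$) they are applied to, and there are only finitely many such AUPs; new variables are drawn fresh but that is a single canonical choice up to renaming. The two genuinely branching rules are \irule{Dec-H}, where a hedge $\tilde s_1,\tilde s_2 \triangleq \tilde q_1,\tilde q_2$ is split — but the side condition forces $\len{\tilde s_1,\tilde q_1}=1$ or $\len{\tilde s_1}=\len{\tilde q_1}=1$, so the split point ranges over a finite set bounded by $\len{\tilde s}+\len{\tilde q}$ — and \irule{Abs-T}, where $c$ ranges over the finite set $A$. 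Hence only finitely many applications exist from any state.

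For the decrease, I would check each rule against the lexicographic tuple $\mu = (\mu_1,\dots,\mu_5)$ defined by: (1) function-symbol plus abstraction occurrences in $P$; (2) sum of squared problem lengths in $P$; (3) number of AUPs in $P$; (4) number of AUPs in $S$; (5) hedge-variable occurrences in $S$. The arguments are routine bookkeeping: \irule{Dec-T} removes one function symbol from $P$, so $\mu_1$ strictly drops (the new AUP $Y\colon\tilde s\triangleq\tilde q$ contains strictly fewer function/abstraction occurrences than $f(\tilde s)\triangleq f(\tilde q)$); \irule{Abs-T} removes one abstraction occurrence, so again $\mu_1$ drops. \irule{Dec-H} leaves $\mu_1$ unchanged but replaces a problem of length $\ell = \len{\tilde s_1,\tilde s_2}$ (taking $\len{\tilde s}=\max$ or however problem length is defined for unequal sides — here $\len{\tilde s_1,\tilde q_1}$ etc.) by two problems whose squared lengths sum to strictly less than $\ell^2$: writing $\ell = p+q$ with $p,q\geq 1$ we have $p^2+q^2 < (p+q)^2$ since $2pq>0$; this uses the side conditions $\len{\tilde s_1,\tilde q_1}\ge 1$ and $\len{\tilde s_2,\tilde q_2}\ge 1$ to guarantee both pieces are nonempty, so $\mu_2$ strictly drops while $\mu_1$ is preserved. \irule{Tri-T}, \irule{Tri-H} remove an AUP from $P$ without adding any and without increasing $\mu_1,\mu_2$, so $\mu_3$ drops. \irule{Sol-T} and \irule{Sol-H} move an AUP from $P$ to $S$: they strictly decrease $\mu_3$, and although $\mu_4$ (and possibly $\mu_5$) grows, that is dominated because $\mu_3$ sits above them in the lexicographic order; I must also check $\mu_1,\mu_2$ do not increase — for \irule{Sol-T} the moved AUP is a pair of suspensions or a head clash, contributing $0$ to $\mu_1$ once removed from $P$, and for \irule{Sol-H} likewise with length-$1$ problems, so $\mu_1$ and $\mu_2$ weakly decrease. \irule{Nar-T} replaces a hedge variable $X$ by an individual variable $x$ in an AUP of $S$: $P$ is untouched so $\mu_1,\mu_2,\mu_3$ are unchanged, $\mu_4$ is unchanged, and $\mu_5$ strictly drops by one. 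Finally \irule{Mer} deletes one AUP from $S$, leaving $P$ untouched: $\mu_1,\mu_2,\mu_3$ unchanged, $\mu_4$ strictly drops.

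The main obstacle — really the only delicate point — is \irule{Sol-T}/\irule{Sol-H}, because these rules make $S$ (and thus $\mu_4,\mu_5$) larger while shrinking $P$; one has to be careful that nothing in the first three components can increase when an AUP leaves $P$. Removing an AUP from $P$ can only decrease $\mu_1$ (fewer occurrences counted), can only decrease $\mu_2$ (one fewer squared-length summand), and strictly decreases $\mu_3$; so the lexicographic comparison is settled at component $3$ regardless of what happens in $S$. A secondary subtlety is checking that \irule{Mer} and \irule{Nar-T}, which operate purely on $S$, genuinely leave $P$ and hence $(\mu_1,\mu_2,\mu_3)$ invariant — this is immediate from the rule shapes since the left- and right-hand sides share the same $P$. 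Since $>$ on $(\N,\N,\N,\N,\N)$ ordered lexicographically is well-founded, no infinite rule-application sequence is possible, and combined with finite branching this yields termination of $\unau$ as claimed.
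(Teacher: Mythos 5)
Your proposal is correct and follows exactly the same strategy as the paper's (much terser) proof: finitely many applications per rule from a finite state, plus a rule-by-rule check that the lexicographic measure $\mu$ strictly decreases, with the key observations being the $p^2+q^2<(p+q)^2$ argument for \irule{Dec-H} and the fact that moving an AUP from $P$ to $S$ is absorbed at the third component. The paper states these facts without elaboration; your write-up simply supplies the routine case analysis it leaves implicit.
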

\begin{proof}
Assuming that $P$ and $S$ are finite, each rule can only be applied in finitely many ways to $P \spsys S \spsys \Gamma \spsys \sigma$. Furthermore, each rule strictly reduces the measure $\mu$.
\end{proof}

\begin{lemma}\label{lem:state-invariant}
Given a state $P;\, S;\, \Gamma;\, \sigma$ and a freshness context $\nabla$, everything based on a finite set of atoms~$A$, such that, for all $\x\colon \s \triangleq \q \in  P \cup S$ holds $\np{\Gamma}{\x\sigma}$ is an $A$-based generalization of $\np{\nabla}{\s}$ and $\np{\nabla}{\q}$.

If $P; S; \Gamma;\sigma \Lra P'; S'; \Gamma'; \sigma'$ is a transformation by a $\unau$ rule,
then, for all $\x\colon \s \triangleq\nobreak \q \in  P \cup S\cup P'\cup S'$,  $\np{\Gamma'}{\x\sigma'}$ is an $A$-based generalization of $\np{\nabla}{\s}$ and $\np{\nabla}{\q}$.
\end{lemma}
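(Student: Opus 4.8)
The statement is a classical invariant-preservation lemma for a rule-based anti-unification procedure, and the natural approach is induction: first establish that the property holds for the initial state (this is actually given as the hypothesis of the lemma, so there is nothing to do there beyond observing that the single initial AUP $X\colon \s \triangleq \q$ with $\sigma=\id$, $\Gamma=\emptyset$ satisfies $\np{\emptyset}{X}\preceq\np{\nabla}{\s}$ and $\np{\emptyset}{X}\preceq\np{\nabla}{\q}$, witnessed by $\{X\mapsto\s\}$ and $\{X\mapsto\q\}$ respectively), and then prove the one-step preservation, which is the real content of the lemma. So the plan is to do a case analysis over the nine transformation rules, and for each rule show: (i) for every AUP $\x\colon\s\triangleq\q$ that is carried over unchanged from $P\cup S$ into $P'\cup S'$, the new generalization $\np{\Gamma'}{\x\sigma'}$ is still an $A$-based generalization of $\np{\nabla}{\s}$ and $\np{\nabla}{\q}$; and (ii) for every AUP newly introduced into $P'\cup S'$ by the rule, the same property holds.

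**Key steps.** For part (i), the substitution update $\sigma' = \sigma\{\x_k\mapsto w\}$ in each rule only acts on the fresh/solving variable that labels the AUP being processed; since that variable does not occur in $\x\sigma$ for any other $\x$ in the state (freshness/newness conditions on the rules), we have $\x\sigma' = \x\sigma$ for the carried-over AUPs, and similarly $\Gamma'\supseteq\Gamma$ (the rules only add freshness constraints, they never delete them, except \irule{Mer} and \irule{Nar-T} which also rename a variable consistently in both $\Gamma$ and $\sigma$). Then $\np{\Gamma}{\x\sigma}\preceq\np{\nabla}{\s}$ together with monotonicity of $\preceq$ under enlarging the context on the left — i.e., $\np{\Gamma}{\w}\preceq\np{\Gamma'}{\w}$ whenever $\Gamma\subseteq\Gamma'$, hence $\np{\Gamma'}{\x\sigma'} = \np{\Gamma'}{\x\sigma}\preceq\np{\nabla}{\s}$ by transitivity — gives the claim. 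This monotonicity fact is immediate from Definition~\ref{def:more-general-relation} using the identity substitution. For part (ii), I would go rule by rule: for \irule{Tri-T}/\irule{Tri-H} there is no new AUP; for \irule{Dec-T}, the new AUP $Y\colon\tilde s\triangleq\tilde q$ needs $\np{\Gamma}{Y\sigma'}\preceq\np{\nabla}{\tilde s}$, and since $Y$ is new, $Y\sigma'=Y$, so this is witnessed by $\{Y\mapsto\tilde s\}$ (which trivially respects $\Gamma$ and has $\Gamma\{Y\mapsto\tilde s\}=\Gamma\subseteq\nabla$, wait — one must check $\Gamma$-to-$\nabla$ containment; in fact the invariant should carry the stronger property that $\Gamma$ only contains constraints derivable from $\nabla$, or one uses that the witnessing substitution instantiates $Y$ and no constraint in $\Gamma$ mentions $Y$), and $\np{\nabla}{\tilde s\sigma'}$: actually the needed check is just $\nabla\vdash \tilde s\approx\tilde s$, i.e. reflexivity of $\approx$; similarly for \irule{Dec-H}, \irule{Abs-T} (here one uses the freshness side-conditions $\nabla\vdash c\#a.t_1$, $\nabla\vdash c\#b.t_2$ and the relation between abstraction and swapping to see that $c.Y$ with $Y\mapsto\swap{c}{a}\permef t_1$ reconstructs $a.t_1$ via rule \irule{$\approx$abs2}), \irule{Sol-T}/\irule{Sol-H} (here the freshness constraints $\{a\#X\mid a\in A,\ \nabla\vdash a\#r_1\wedge\nabla\vdash a\#r_2\}$ added to $\Gamma$ are exactly designed so that the witnessing substitution $\{X\mapsto r_1\}$ respects the new context and $\Gamma'\{X\mapsto r_1\}\subseteq\nabla$), \irule{Mer} (the renaming $\x_2\mapsto\pi\permef\x_1$ together with the equivariance side-condition $\nabla\vdash\pi\permef\s_1\approx\s_2$ gives that the generalization of the $\x_2$-AUP is still valid — this is the step requiring the most care), and \irule{Nar-T} (the renaming $X\mapsto x$ is harmless since any substitution instantiating $X$ can be turned into one instantiating $x$, noting the AUP $x\colon t_1\triangleq t_2$ has individual terms on both sides so $x$ may legitimately be an individual variable).

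**Main obstacle.** The delicate cases are \irule{Mer} and \irule{Abs-T}. For \irule{Abs-T} the subtlety is purely nominal: one must verify that $\np{\Gamma}{c.Y}$ with $Y$ instantiated appropriately generalizes $\np{\nabla}{a.t_1}$, which requires unwinding the $\approx$ rules for abstractions and the interaction of swappings with substitution, plus checking that $c\in A$ keeps everything $A$-based. For \irule{Mer} the obstacle is that the rule does not merely carry AUPs over and add constraints — it substitutes one generalization variable by a permuted copy of another and applies this to both $\Gamma$ and $\sigma$; one must check that (a) the invariant for the surviving AUP $\x_1\colon\s_1\triangleq\q_1$ is unaffected, and (b) any other AUP $\x\colon\s\triangleq\q$ whose $\x\sigma$ mentioned $\x_2$ still has $\np{\Gamma'}{\x\sigma'}$ generalizing $\np{\nabla}{\s},\np{\nabla}{\q}$, which follows from the previous invariant plus the fact that $\x\sigma'$ is obtained from $\x\sigma$ by replacing $\x_2$ with $\pi\permef\x_1$, and the two generalizations $\np{\Gamma}{\x\sigma}$ and $\np{\Gamma'}{\x\sigma'}$ are equi-general because $\x_2$ and $\pi\permef\x_1$ behave identically on $\s_1,\q_1$ under $\nabla$ by the equivariance condition — this transfer argument, via composing the old witnessing substitution with the renaming, is the technical heart and the place where one must be most careful about the $\Gamma$-to-$\nabla$ bookkeeping and about $\approx$-congruence (substitutivity of $\approx$, which should be recorded as a routine sublemma if not already available). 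Everything else is a matter of invoking reflexivity of $\approx$, monotonicity of $\preceq$ under context enlargement, and the fact that newly introduced variables are fresh so the witnessing ``reconstruction'' substitutions are the obvious projections onto the original subterms.
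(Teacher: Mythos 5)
Your proposal is correct and follows essentially the same route as the paper: a rule-by-rule inspection showing the invariant is preserved, with the witnessing substitutions being exactly the left/right projections of the remaining AUPs (what the paper packages as $\substils{P'\cup S'}$ and $\substirs{P'\cup S'}$), and with newly introduced variables handled trivially because $\sigma'$ assigns them nothing. You are in fact more explicit than the paper's two-sentence argument about the genuinely delicate points (\irule{Abs-T}, \irule{Mer}, and the $\Gamma\sigma\subseteq\nabla$ bookkeeping), which the paper subsumes under ``inspecting the rules.''
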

The proof of Lemma \ref{lem:state-invariant} uses two substitutions that can be obtained from any set $M$ of AUPs:
\begin{align*}
\substils{M} ::= &\;
\{\x\mapsto \s \mid \x\colon \s \triangleq \q \in M \}
\text{ and }
\\
\substirs{M} ::= &\;
\{\x\mapsto \q \mid \x\colon \s \triangleq \q \in M \}.
\end{align*}
\begin{proof}
Inspecting the rules, it is easy to see that each of them fulfills the invariant.
In particular, given a transformation $P;\, S;\, \Gamma;\, \sigma \Lra P';\, S';\, \Gamma';\, \sigma'$, every rule fulfills:
\begin{itemize}
    \item $\s\approx \x\sigma'\substils{P'\cup S'}$ and $\q\approx \x\sigma'\substirs{P'\cup S'}$ hold for all $\x\colon \s \triangleq \q \in  P \cup S$.
\end{itemize}
The second part, namely that for all $\x\colon \s \triangleq \q \in  P' \cup S' \text{ holds } \s\approx \x\sigma'\substils{P'\cup S'} \text{ and } \q\approx \x\sigma'\substirs{P'\cup S'}$, is trivial since $\sigma'$ does not contain assignments for new generalization variables: it simply reduces to $\s\approx \x\substils{P'\cup S'} \text{ and } \q\approx \x\substirs{P'\cup S'}$.
\end{proof}

The soundness theorem states that $\unau$ computes $A$-based generalizations of the given $A$-based input terms-in-context.
\begin{theorem}[Soundness]
	\label{thm:soundness}
	Given two hedges or terms $\s,\q$ and a freshness context $\nabla$,
	all based on a finite set of atoms $A$. If
	$\{X: \s\triangleq \q\};\, \emptyset;\, \emptyset;\, \id \overset{+}{\Lra} \emptyset;\,S; \Gamma;\, \sigma$
	is a derivation obtained by an execution of $\unau$,
	then $\np{\Gamma}{X\sigma}$ is an $A$-based generalization of
	$\np{\nabla}{\s}$ and $\np{\nabla}{\q}$.
\end{theorem}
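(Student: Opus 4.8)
The plan is to derive Theorem~\ref{thm:soundness} from Lemma~\ref{lem:state-invariant} together with the reconstruction identities that appear inside its proof, by induction on the length of the derivation. The role of Lemma~\ref{lem:state-invariant} is to keep the generalization property alive for the AUPs that are currently in $P\cup S$; the extra work is to transport it from those AUPs to the distinguished variable $X$ of the initial state, which in general leaves $P\cup S$ after the first step.

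First I would establish the base case. The initial state $\{X\colon\s\triangleq\q\}\spsys\emptyset\spsys\emptyset\spsys\id$ satisfies the hypothesis of Lemma~\ref{lem:state-invariant}: $\np{\emptyset}{X\id}=\np{\emptyset}{X}$ is an $A$-based generalization of $\np{\nabla}{\s}$ and of $\np{\nabla}{\q}$, witnessed respectively by the $A$-based substitutions $\{X\mapsto\s\}$ and $\{X\mapsto\q\}$ (they trivially respect $\emptyset$, satisfy $\emptyset\subseteq\nabla$, and reduce the goal to $\nabla\vdash\s\approx\s$ and $\nabla\vdash\q\approx\q$, i.e. reflexivity of $\approx$). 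A routine side induction shows that each rule preserves $A$-basedness of the whole state --- using the guards $c\in A$ in \irule{Abs-T}, $a\in A$ in \irule{Sol-T} and \irule{Sol-H}, and the $A$-basedness of the permutation in \irule{Mer} --- so in particular $\atoms(X\sigma)\subseteq A$ and $\atoms(\Gamma)\subseteq A$ along the whole derivation. Combining this with Lemma~\ref{lem:state-invariant} and induction on the number of steps, every state $P_k\spsys S_k\spsys\Gamma_k\spsys\sigma_k$ reachable from the initial one satisfies the invariant: $\np{\Gamma_k}{\x\sigma_k}$ is an $A$-based generalization of $\np{\nabla}{\s'}$ and $\np{\nabla}{\q'}$ for every $\x\colon\s'\triangleq\q'\in P_k\cup S_k$.

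Next I would bridge the gap to $X$. The proof of Lemma~\ref{lem:state-invariant} records that, for a single step $P\spsys S\spsys\Gamma\spsys\sigma\Lra P'\spsys S'\spsys\Gamma'\spsys\sigma'$ and every $\x\colon\s'\triangleq\q'\in P\cup S$, one has $\s'\approx\x\sigma'\substils{P'\cup S'}$ and $\q'\approx\x\sigma'\substirs{P'\cup S'}$. I would chain these identities along the derivation. The auxiliary observations are: (i) every generalization variable occurring in $X\sigma_k$ labels an AUP of $P_k\cup S_k$, hence lies in the domain of $\substils{P_k\cup S_k}$ and of $\substirs{P_k\cup S_k}$; (ii) such a variable is never in $\dom(\sigma_k)$, so $X\sigma_{k+1}\substils{P_{k+1}\cup S_{k+1}}$ can be rewritten, by associativity of substitution composition, as $X\sigma_k$ with each occurrence of such a variable $\x$ replaced by $\x\sigma_{k+1}\substils{P_{k+1}\cup S_{k+1}}$; and (iii) $\approx$ is a congruence and is transitive. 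Then, using the one-step identity to rewrite each $\x\sigma_{k+1}\substils{P_{k+1}\cup S_{k+1}}$ as $\approx$-equal to $\x\substils{P_k\cup S_k}$ and closing under congruence, one gets $X\sigma_{k+1}\substils{P_{k+1}\cup S_{k+1}}\approx X\sigma_k\substils{P_k\cup S_k}$; an induction on $k$ thus yields $\s\approx X\sigma_k\substils{P_k\cup S_k}$ and $\q\approx X\sigma_k\substirs{P_k\cup S_k}$. For the final state, where $P=\emptyset$, this reads $\s\approx X\sigma\substils{S}$ and $\q\approx X\sigma\substirs{S}$, and hence $\nabla\vdash\s\approx X\sigma\substils{S}$ and $\nabla\vdash\q\approx X\sigma\substirs{S}$, since the rules defining $\approx$ are monotone in the freshness context.

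Finally I would assemble the pieces to show $\np{\Gamma}{X\sigma}$ is an $A$-based generalization of $\np{\nabla}{\s}$ and $\np{\nabla}{\q}$. Unfolding Definition~\ref{def:more-general-relation}, for $\np{\Gamma}{X\sigma}\preceq\np{\nabla}{\s}$ I take the witness $\substils{S}$: it is $A$-based because its range is the set of left sides of AUPs of $S$; it respects $\Gamma$ and yields $\Gamma\substils{S}\subseteq\nabla$ because every constraint $a\#\x$ ever added to the freshness component is introduced by \irule{Sol-T} or \irule{Sol-H} under the guard $\nabla\vdash a\#(\text{the relevant side of }\x)$, and is afterwards only propagated through the permutation renamings of \irule{Mer} and \irule{Nar-T}, which preserve freshness derivability; and $\nabla\vdash X\sigma\substils{S}\approx\s$ is the reconstruction identity just established. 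The relation $\np{\Gamma}{X\sigma}\preceq\np{\nabla}{\q}$ is obtained symmetrically through $\substirs{S}$, and $\atoms(\np{\Gamma}{X\sigma})\subseteq A$ was noted above. I expect the main obstacle to be exactly this last freshness bookkeeping --- verifying that $\substils{S}$ and $\substirs{S}$ still respect $\Gamma$ and that $\Gamma\substils{S}\subseteq\nabla$ and $\Gamma\substirs{S}\subseteq\nabla$, which requires tracking the joint evolution of $\Gamma$ and $S$ through \irule{Mer} (which rewrites $\Gamma$ by a permutation while removing an AUP) and \irule{Nar-T}; by comparison, chaining the reconstruction identity is routine once observation (i) is in place.
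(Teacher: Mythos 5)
Your proposal is correct and follows essentially the same route as the paper: the paper's own proof of Theorem~\ref{thm:soundness} is a one-line appeal to Lemma~\ref{lem:state-invariant} combined with induction on the derivation length, which is exactly the skeleton you elaborate. Your additional work (checking the base case, preserving $A$-basedness, chaining the reconstruction identities $\s\approx X\sigma\substils{S}$ and $\q\approx X\sigma\substirs{S}$ to carry the property for $X$ past the step where its AUP leaves $P\cup S$, and the freshness bookkeeping for $\Gamma\substils{S}\subseteq\nabla$) is precisely the detail the paper leaves implicit, so this is a faithful, more explicit rendering of the intended argument rather than a different one.
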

\begin{proof}
Follows from Lemma \ref{lem:state-invariant}, by induction on the derivation length.
\end{proof}

The completeness theorem states that, given two $A$-based terms-in-context and an arbitrary $A$-based generalization $G$ of them, $\unau$ computes an $A$-based generalization of the given terms-in-context that is less (or equi-) general than $G$.

\begin{theorem}[Completeness]\label{thm:completeness}
	Given two hedges or terms $\s,\q$ and a freshness context $\nabla$,
	all based on a finite set of atoms $A$. If $G$ is
	an $A$-based generalization of $\np{\nabla}{\s}$ and
	$\np{\nabla}{\q}$, then there exists a derivation
	$
	\{X: \s\triangleq \q\};\, \emptyset;\, \emptyset;\, \id \overset{+}{\Lra}
	\emptyset;\,S;  \Gamma;\, \sigma
	$
	obtained by an execution of $\unau$, such that
	$G\preceq \np{\Gamma}{X\sigma}$.
\end{theorem}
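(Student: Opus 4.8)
The plan is to fix the given generalization and then maintain, along a \emph{suitably chosen} $\unau$-derivation, an invariant from which $G \preceq \np{\Gamma}{X\sigma}$ follows at the final state. Since $\unau$ is nondeterministic, the point is to always pick a rule application that stays compatible with $G$; Theorem~\ref{thm:termination} then guarantees that the derivation so constructed is finite and ends in a final state. I would first fix $G = \np{\Delta}{g}$ together with witnessing substitutions $\theta_L,\theta_R$ that respect $\Delta$, satisfy $\Delta\theta_L \subseteq \nabla$ and $\Delta\theta_R \subseteq \nabla$, and $\nabla\vdash g\theta_L \approx \s$, $\nabla \vdash g\theta_R \approx \q$; these exist by Definition~\ref{def:more-general-relation} and the definition of generalization.

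Next I would set up a state invariant $\mathcal{I}_G$ expressing that the partial generalization built so far \emph{refines} $g$ consistently with $\theta_L,\theta_R$: there is a substitution $\varphi$, mapping the variables of $g$ to terms over $A$ and the generalization variables occurring in $P\cup S$, such that (i) $\Gamma^{+}\vdash g\varphi \approx X\sigma$ and $\Delta\varphi \subseteq \Gamma^{+}$, where $\Gamma^{+}$ extends $\Gamma$ by the freshness constraints that \irule{Sol-T}/\irule{Sol-H} would still emit for the AUPs currently in $P$; and (ii) for every $\x\colon \s'\triangleq \q' \in P\cup S$, the component of $g\varphi$ sitting at $\x$ is mapped, up to $\approx$ and a permutation, by $\theta_L$ to $\s'$ and by $\theta_R$ to $\q'$, so that $\np{\nabla}{\s'}$ and $\np{\nabla}{\q'}$ are still jointly generalizable in a way consistent with $G$ (this is the completeness counterpart of the reconstruction facts in Lemma~\ref{lem:state-invariant}). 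The initial state $\{X\colon \s\triangleq \q\};\emptyset;\emptyset;\id$ satisfies $\mathcal{I}_G$ via $\varphi = \{X\mapsto g\}$.

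The heart of the proof is the preservation step: every non-final state satisfying $\mathcal{I}_G$ admits a rule application to a state that again satisfies $\mathcal{I}_G$, and the rule and its instance are dictated by the shape of the component of $g\varphi$ at some pending AUP $\x\colon \s'\triangleq \q'$. If it is headed by a function symbol, apply \irule{Dec-T}; if it is an abstraction, apply \irule{Abs-T}, choosing $c\in A$ fresh for both sides (invoking the remark after the problem statement that $A$ is large enough, and checking that the swapping-renaming keeps the invariant); if it is a suspension, or if $\head{\s'}\neq\head{\q'}$, apply \irule{Sol-T}/\irule{Sol-H} and verify that each constraint of $\Delta\varphi$ on the relevant variable already lies in the set $\{a\#\x \mid a\in A,\ \nabla\vdash a\#r_1 \wedge \nabla \vdash a\#r_2\}$ added to $\Gamma$, which follows from $\Delta\theta_L\subseteq\nabla$, $\Delta\theta_R\subseteq\nabla$ by pushing the witnesses through the decomposition (modulo tracking suspension permutations); if it is a proper hedge, apply \irule{Dec-H} using the split of $\s',\q'$ prescribed by $\theta_L,\theta_R$, peeling off one leading piece at a time so that the side condition ``$\len{\tilde s_1,\tilde q_1}=1$ or $\len{\tilde s_1}=\len{\tilde q_1}=1$'' is met. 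Trivial AUPs are removed by \irule{Tri-T}/\irule{Tri-H}, compatibly with $\varphi$ sending the corresponding variable to an atom or to $\varepsilon$. On the store, whenever $g$ identifies two occurrences by the same variable (possibly up to a permutation $\pi$), the corresponding AUPs $\x_1\colon \s_1\triangleq\q_1$ and $\x_2\colon\s_2\triangleq\q_2$ satisfy $\nabla\vdash\pi\permef\s_1\approx\s_2$ and $\nabla\vdash\pi\permef\q_1\approx\q_2$, so \irule{Mer} applies with that $\pi$; and whenever $g$ has an individual variable at a position carrying a hedge variable with an individual AUP, \irule{Nar-T} applies. In each case $\varphi$ is updated accordingly and (i)--(ii) rechecked.

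At a final state $P=\emptyset$, so $\Gamma^{+}=\Gamma$ and all AUPs lie in $S$; since neither \irule{Mer} nor \irule{Nar-T} is applicable, the invariant collapses to the existence of $\varphi$ respecting $\Delta$ with $\Delta\varphi\subseteq\Gamma$ and $\Gamma\vdash g\varphi\approx X\sigma$, i.e., $G\preceq\np{\Gamma}{X\sigma}$, which is the claim. I expect the main obstacle to be the preservation step for \irule{Dec-H} in the presence of hedge variables: since hedges admit many splits and a hedge variable of $g$ may be sent by $\theta_L$ and $\theta_R$ to sequences of different (and possibly zero) lengths, one must argue that the multi-way split dictated by $G$ is always realizable by a sequence of legal \irule{Dec-H} steps, while simultaneously controlling the freshness constraints that \irule{Sol-H} will later emit on the pieces so that they never exceed those carried by $G$. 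A secondary subtlety is ensuring \irule{Mer} is invoked exactly when $G$ shares a variable across positions, so that the derivation is not steered to a generalization strictly below $G$, and that the atom chosen in each \irule{Abs-T} step can always be taken within the fixed finite set $A$.
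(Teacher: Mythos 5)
Your proposal is correct and follows essentially the same route as the paper, which only states that completeness ``proceeds by induction over the structure of the given generalization'' (following \cite{DBLP:conf/rta/BaumgartnerKLV15,DBLP:journals/jar/KutsiaLV14}); your invariant-preservation formulation, where the shape of the component of $g\varphi$ at each pending AUP dictates the rule to apply, is precisely that structural induction made explicit, and it is considerably more detailed than the paper's one-sentence sketch. One minor simplification you could note: forced applications of \irule{Mer} or \irule{Nar-T} near the end can only make the computed answer \emph{less} general, so by transitivity of $\preceq$ they never threaten $G\preceq\np{\Gamma}{X\sigma}$, and the steering concern you raise for \irule{Mer} is therefore not actually an obstacle.
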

\begin{proof}[Proof sketch] The idea is similar to the completeness proofs from \cite{DBLP:conf/rta/BaumgartnerKLV15,DBLP:journals/jar/KutsiaLV14} and proceeds by induction over the structure of the given generalization.
\end{proof}

\begin{example}\label{ex:clone-appl-unau}
Consider the variadic nominal representation of code pieces from Figure \ref{fig:sw-clone-hedges}. Let $s$ be the original code and $t$ be its type-3 clone. The set of atoms is $A=\atoms(s)\cup\atoms(t) = \{n,sum,\allowbreak prod,\allowbreak i,\allowbreak a,\allowbreak s,\allowbreak p,j\}$. The input freshness context is empty.
The algorithm renames abstractions, in accordance to the equivalence relation $\approx$, and progresses down the terms that represent the abstract syntax trees:
\[\arraycolsep=1pt
\begin{array}{ll}
	& \{X \colon s \triangleq t \} \spsys \emptyset \spsys \emptyset \spsys \id \\
	\Lra_{\text{\sf Abs-T}}^3
	& \{X' \colon s' \triangleq t' \} \spsys \emptyset \spsys \emptyset \spsys \{X\mapsto a.s.p.X' \} \\
	\Lra_{\text{\sf Dec-T}}
	& \{X''\colon \tilde s \triangleq \tilde q \} \spsys \emptyset \spsys \emptyset ; \{X\mapsto a.s.p.\text{\normalfont sumProd($X''$)},\dots \}
\end{array}
\]
{
\setlength{\tabcolsep}{3pt}
\begin{tabular}{rl}
where $\tilde s$ is
&\normalfont input(type(int), $a$), returnType(void),\\
&\normalfont   {=}(type(float), $s$, 0.0), {=}(type(float), $p$, 1.0),\\
&\normalfont   $i.$for({=}(type(int),$i$,1), {$\leq$}($i,a$), \incmt($i$),\\
&\normalfont \qquad {=}($s$, +($s,i$)), {=}($p$, *($p,i$)), foo($s,p$))\\[1mm]
and $\tilde q$ is
&\normalfont input(type(int), $a$), returnType(void),\\
&\normalfont   {=}(type(double), $s$, 0.0), {=}(type(double), $p$, 1.0),\\
&\normalfont   $j.$for({=}(type(int),$j$,1), {$\leq$}($j,a$), \incmt($j$),\\
&\normalfont \qquad {=}($s$, +($s,j$)), foo($j,s,p,a$)).
\end{tabular}
}

\medskip
Consecutive applications of the rule \irule{Dec-H} generate various branches. Here we illustrate only one, which leads to an lgg interesting from the point of view of clones.

Applying \irule{Dec-H} four times, using $\len{\tilde s_1}=\len{\tilde q_1}=1$ where $\tilde s = \tilde s_1, \tilde s_2$ and $\tilde q = \tilde q_1, \tilde q_2$, emits the state $Q$: 
\[\arraycolsep=1pt
\begin{array}{ll}
	Q:=
	& \{Y_1\colon \text{\normalfont input(type(int), $a$)} \triangleq \text{\normalfont input(type(int), $a$)},
	\\ & \phantom{\{}Y_2\colon \text{\normalfont returnType(void)} \triangleq \text{\normalfont returnType(void)},
	\\ & \phantom{\{}Y_3\colon \text{\normalfont {=}(type(float),\:$s$,\:0.0)} \triangleq \text{\normalfont {=}(type(double),\:$s$,\:0.0)},
	\\ & \phantom{\{}Y_4\colon \text{\normalfont {=}(type(float),\:$p$,\:1.0)} \triangleq \text{\normalfont {=}(type(double),\:$p$,\:1.0)},
	\\ & \phantom{\{}Y_5\colon s'' \triangleq t''
	\} \spsys 
	\\ & \;\emptyset ;\, \emptyset ;\, \{X\mapsto a.s.p.\text{\normalfont sumProd($Y_1,Y_2,Y_3,Y_4,Y_5$)},\dots \}
\end{array}
\]
{
\setlength{\tabcolsep}{3pt}
\begin{tabular}{rl}
	where $s''$ is
	&\normalfont   $i.$for({=}(type(int),$i$,1), {$\leq$}($i,a$), \incmt($i$),\\
	&\normalfont \qquad {=}($s$, +($s,i$)), {=}($p$, *($p,i$)), foo($s,p$))\\[1mm]
	and $t''$ is
	&\normalfont   $j.$for({=}(type(int),$j$,1), {$\leq$}($j,a$), \incmt($j$),\\
	&\normalfont \qquad {=}($s$, +($s,j$)), foo($j,s,p,a$)).
\end{tabular}
}

\medskip
From the state $Q$, the branch continues as follows:
\begin{alignat*}{4}
  \Lra_{\text{\sf{Dec-T}},}^+ & & & {_\text{\sf{Dec-H,\,Tri-T}}^{\phantom{+}}} 
    \\& & &
    \{Y_3'\colon \text{\normalfont float} \triangleq \text{\normalfont double},\; Y_4'\colon \text{\normalfont float} \triangleq \text{\normalfont double},
	\\ & & & \phantom{\{}Y_5\colon s'' \triangleq t''
	\} \spsys \emptyset \spsys \emptyset \spsys 
	\\ & & & \{X\mapsto a.s.p.\text{\normalfont sumProd(input(type(int),\,$a$)},\:
    \\ & & & \phantom{\{X\mapsto a.s.p.} \quad \text{\normalfont returnType(void),}
	\\ & & & \phantom{\{X\mapsto a.s.p.} \quad \text{\normalfont {=}(type($Y_3'$),\,$s$,\,0.0),}  
    \\ & & & \phantom{\{X\mapsto a.s.p.} \quad \text{\normalfont{=}(type($Y_4'$),\,$p$,\,1.0),$Y_5$)},\dots \} \\
 \Lra_{\text{\sf{Sol-T}},} & & & {_\text{\sf{Mer,\,Nar-T}}} \\
    & & & \{Y_5\colon s'' \triangleq t''
	\} \spsys \{z\colon \text{\normalfont float} \triangleq \text{\normalfont double}\} \spsys
    \\ & & & \{b\#z\mid b\in A\} \spsys 
    \\ & & & \{X\mapsto a.s.p.\text{\normalfont sumProd(input(type(int),\,$a$)},\:
    \\ & & & \phantom{\{X\mapsto a.s.p.} \quad \text{\normalfont returnType(void),}
	\\ & & & \phantom{\{X\mapsto a.s.p.} \quad \text{\normalfont {=}(type($z$),\,$s$,\,0.0),}  
    \\ & & & \phantom{\{X\mapsto a.s.p.} \quad \text{\normalfont{=}(type($z$),\,$p$,\,1.0)},Y_5),\dots \}  \\
    \Lra_{\text{\sf{Abs-T}},} & & & {_\text{\sf{Dec-T}}} 
    \\ & & & \{Y_5'\colon \tilde{s}' \triangleq \tilde{q}'
	\} \spsys \{z\colon \text{\normalfont float} \triangleq \text{\normalfont double}\} \spsys 
    \\ & & & \{b\#z\mid b\in A\} \spsys 
   \\ & & & \{X\mapsto a.s.p.\text{\normalfont sumProd(input(type(int),\,$a$)},\:
    \\ & & & \phantom{\{X\mapsto a.s.p.} \quad \text{\normalfont returnType(void),}
	\\ & & & \phantom{\{X\mapsto a.s.p.} \quad \text{\normalfont {=}(type($z$),\,$s$,\,0.0),}  
    \\ & & & \phantom{\{X\mapsto a.s.p.} \quad \text{\normalfont{=}(type($z$),\,$p$,\,1.0)}, 
    \\ & & & \phantom{\{X\mapsto a.s.p.}  \quad i.\text{\normalfont for($Y_5'$)}),\dots \}  
\end{alignat*}

{
\setlength{\tabcolsep}{3pt}
\begin{supertabular}{rl}
	where $\tilde{s}'$ is
	&\normalfont {=}(type(int),$i$,1), {$\leq$}($i,a$), \incmt($i$),\\
	&\normalfont {=}($s$, +($s,i$)),\\
	&\normalfont {=}($p$, *($p,i$)), foo($s,p$)\\[1mm]
	and $\tilde{q}'$ is
	&\normalfont {=}(type(int),$i$,1), {$\leq$}($i,a$), \incmt($i$),\\
	&\normalfont {=}($s$, +($s,i$)), foo($i,s,p,a$).\\
\end{supertabular}
}\vspace{3mm}

Applying the hedge decomposition rule \irule{Dec-H} several times to the problem $Y_5':\tilde{s}'\triangleq \tilde{q}'$, we get:
\begin{alignat*}{4}
    & & & \text{\normalfont{$\{Z_1\colon${=}(type(int),$i$,1) $\triangleq$ {=}(type(int),$i$,1)}},
   \\ & & & \text{\normalfont{\phantom{$\{$}$Z_2\colon${$\leq$}($i,a$) $\triangleq$ {$\leq$}($i,a$)}}, 
   \\ & & & \text{\normalfont{\phantom{$\{$}$Z_3\colon$\incmt($i$) $\triangleq$ \incmt($i$),}} 
   \\ & & & \text{\normalfont{\phantom{$\{$}$Z_4\colon${=}($s$,\,+($s,i$)) $\triangleq$ {=}($s$,\,+($s,i$)),}} 
   \\ & & & \text{\normalfont{\phantom{$\{$}$Z_5\colon${=}($p$,\,*($p,i$)) $\triangleq \varepsilon$,}} 
	  \\ & & & \text{\normalfont{\phantom{$\{$}$Z_6\colon$foo($s,p$) $\triangleq$ foo($i,s,p,a$)$\}\spsys$}}
     \\ & & & 
     \{z\colon \text{\normalfont float} \triangleq \text{\normalfont double}\} \spsys 
     \\ & & & \{b\#z\mid b\in A\} \spsys 
    \\ & & & \{X\mapsto a.s.p.\text{\normalfont sumProd(input(type(int),\,$a$)},\:
    \\ & & & \phantom{\{X\mapsto a.s.p.} \quad \text{\normalfont returnType(void),}
	\\ & & & \phantom{\{X\mapsto a.s.p.} \quad \text{\normalfont {=}(type($z$),\,$s$,\,0.0),}  
    \\ & & & \phantom{\{X\mapsto a.s.p.} \quad \text{\normalfont{=}(type($z$),\,$p$,\,1.0)}, 
    \\ & & & \phantom{\{X\mapsto a.s.p.}  \quad i.\text{\normalfont for($Z_1,Z_2,Z_3,Z_4,Z_5,Z_6$)}),\dots \} 
\end{alignat*}

The first four AUPs can be transformed by a sequence of \irule{Dec-T}, \irule{Dec-H},  \irule{Tri-T}, and \irule{Tri-H} rules, and the fifth one with \irule{Sol-H}, obtaining
\begin{alignat*}{4}
    & & &  \{ Z_6\colon \text{\normalfont foo($s,p$) $\triangleq$ foo($i,s,p,a$)\} \spsys }
     \\ & & & 
     \{z\colon \text{\normalfont float} \triangleq \text{\normalfont double}, Z_5 \colon \text{\normalfont {=} ($p$,\,*($p,i$)) }\triangleq \varepsilon \} \spsys 
     \\ & & & \{b\#z\mid b\in A\} \cup \{b \# Z_5 \mid b \in A\setminus \{ p,i\} \} \spsys 
    \\ & & & \{X\mapsto a.s.p.\text{\normalfont sumProd(input(type(int),\,$a$)},\:
    \\ & & & \phantom{\{X\mapsto a.s.p.} \quad \text{\normalfont returnType(void),}
	\\ & & & \phantom{\{X\mapsto a.s.p.} \quad \text{\normalfont {=}(type($z$),\,$s$,\,0.0),}  
    \\ & & & \phantom{\{X\mapsto a.s.p.} \quad \text{\normalfont{=}(type($z$),\,$p$,\,1.0)}, 
    \\ & & & \phantom{\{X\mapsto a.s.p.}  \quad i.\text{\normalfont for({=}(type(int),$i$,1),} 
     \\ & & & \phantom{\{X\mapsto a.s.p. \quad i.\text{\normalfont for(}} \text{\normalfont {$\leq$}($i,a$),} 
     \\ & & & \phantom{\{X\mapsto a.s.p. \quad i.\text{\normalfont for(}} \text{\normalfont \incmt($i$),} 
    \\ & & & \phantom{\{X\mapsto a.s.p. \quad i.\text{\normalfont for(}} \text{\normalfont {=}($s$,\,+($s,i$)),} 
    \\ & & & \phantom{\{X\mapsto a.s.p. \quad i.\text{\normalfont for(}} Z_5,Z_6)),\dots \}
\end{alignat*}

In the final stage of this derivation, the AUP $Z_6\colon \text{\normalfont foo($s,p$) $\triangleq$ foo($i,s,p,a$)}$ will be processed by a sequence of \irule{Dec-T}, \irule{Dec-H},  \irule{Sol-H}, \irule{Tri-T}, \irule{Tri-H}, and \irule{Mer} rules, which will give the final state:
\begin{alignat*}{4}
    & & &  \emptyset \spsys 
     \{z\colon \text{\normalfont float} \triangleq \text{\normalfont double}, Z_5 \colon \text{\normalfont {=} ($p$,\,*($p,i$)) }\triangleq \varepsilon, V \colon \varepsilon \triangleq i \}  \spsys 
     \\ & & & \{b\#z\mid b\in A\} \cup \{b \# Z_5 \mid b \in A\setminus \{ p,i\} \} \cup {} 
    \\ & & & \quad \{b\# V\mid b\in A\setminus \{i\} \} \spsys 
    \\ & & & \{X\mapsto a.s.p.\text{\normalfont sumProd(input(type(int),\,$a$)},\:
    \\ & & & \phantom{\{X\mapsto a.s.p.} \quad \text{\normalfont returnType(void),}
	\\ & & & \phantom{\{X\mapsto a.s.p.} \quad \text{\normalfont {=}(type($z$),\,$s$,\,0.0),}  
    \\ & & & \phantom{\{X\mapsto a.s.p.} \quad \text{\normalfont{=}(type($z$),\,$p$,\,1.0)}, 
    \\ & & & \phantom{\{X\mapsto a.s.p.}  \quad i.\text{\normalfont for({=}(type(int),$i$,1),} 
     \\ & & & \phantom{\{X\mapsto a.s.p. \quad i.\text{\normalfont for(}} \text{\normalfont {$\leq$}($i,a$),} 
     \\ & & & \phantom{\{X\mapsto a.s.p. \quad i.\text{\normalfont for(}} \text{\normalfont \incmt($i$),} 
    \\ & & & \phantom{\{X\mapsto a.s.p. \quad i.\text{\normalfont for(}} \text{\normalfont {=}($s$,\,+($s,i$)),} 
    \\ & & & \phantom{\{X\mapsto a.s.p. \quad i.\text{\normalfont for(}} Z_5,
    \\ & & & \phantom{\{X\mapsto a.s.p. \quad i.\text{\normalfont for(}} \text{\normalfont {=}foo($V,s,p,\swap{a}{i}\permef V$)} )),\dots \}
\end{alignat*}

The final state contains several pieces of interesting information. First, it reveals the similarity between the given code fragments (represented by $s$ and $t$) in the form of the generalization term assigned to $X$ in the final substitution. Notably, the original differences between the two code fragments, such as variations in the names of arguments and local variables, are effectively handled in the computed generalization by renaming and identifying bound atoms (thus $a,s,p,i$ in the answer), thereby enhancing the quality of the detected similarity.\footnote{For the same reason, comparing $s$ with its type-2 clone results in a generalization that shows the two fragments are almost identical modulo bound atom names, thus providing more accurate insight into the nature of the clone.} 

Further, the store $\{z\colon \text{\normalfont float} \triangleq \text{\normalfont double}, Z_5 \colon \text{\normalfont {=} ($p$,\,*($p,i$))}\triangleq\nobreak \varepsilon, \allowbreak V \colon \varepsilon \triangleq i \}$ provides the ``difference measure'' between the original code fragments (similar to, e.g., edit distance) and illustrates how each of them can be reconstructed from the generalization: replacing $z$ by float, $Z_5$ by $\text{\normalfont {=} ($p$,\,*($p,i$))}$, and $V$ by the empty hedge yields $s$, while replacing $z$ by double, eliminating $Z_5$ by replacing it with the empty hedge, and replacing $V$ by $i$ yields $t$. This information can be useful for code reformatters. Moreover, reusing $V$ in two different places in the generalization, with the suspension $\swap{a}{i}\permef V$ in one of them, precisely captures how these two positions differ from each other in the original code: by a swap of atoms $a$ and $i$. Further potentially useful information is contained in the freshness constraints, showing which atoms are forbidden in the missing places (i.e., which local or argument variables do not appear there).

Despite all these advantages, the algorithm has some serious limitations, particularly concerning efficiency: the nondeterminism involved is too high to allow practical use, as the next example illustrates. Moreover, the algorithm offers little control over how similar parts are selected in the input hedges, instead relying on an exhaustive brute-force search. These drawbacks will be addressed in the following section.
\end{example}

\begin{example}\label{ex:many-lggs}
Let $s=f(a,b,b,a)$ and $t=f(c, c)$ with $A=\{a,b,c\}$ and $\nabla=\emptyset$. The following final states emit pairwise incomparable lggs:
\begin{align*}
S_1   := {} & \emptyset \spsys \{ y \colon a \triangleq c,  Z \colon b \triangleq \varepsilon \} \spsys \{b \# y,\, a \# Z, c \# Z\} \spsys 
\\& \{ X \mapsto f(y, \swap{a}{b}\permef y, Z, \swap{a}{b}\permef Z), \ldots \}
\\
S_2   := {} & \emptyset \spsys \{ y \colon a \triangleq c,  Z \colon b \triangleq \varepsilon \} \spsys \{b \# y,\, a \# Z, c \# Z\} \spsys 
\\& \{ X \mapsto f(y, Z, \swap{a}{b}\permef y, \swap{a}{b}\permef Z), \ldots \}
\\
S_3   := {} & \emptyset \spsys \{ y \colon a \triangleq c,  Z \colon b \triangleq \varepsilon \} \spsys \{b \# y,\, a \# Z, c \# Z\} \spsys
\\& \{ X \mapsto f(y, Z, Z, y), \ldots \}
\\
S_4   := {} & \emptyset \spsys \{ y \colon b \triangleq c,  Z \colon a \triangleq \varepsilon \} \spsys \{a \# y,\, b \# Z, c \# Z\} \spsys 
\\& \{ X \mapsto f(Z, y, y, Z), \ldots \}
\\
S_5   := {} & \emptyset \spsys \{ y \colon b \triangleq c,  Z \colon a \triangleq \varepsilon \} \spsys \{a \# y,\, b \# Z, c \# Z\} \spsys 
\\& \{ X \mapsto f(Z, y, \swap{a}{b}\permef Z, \swap{a}{b}\permef y), \ldots \}
\\
S_6   := {} & \emptyset \spsys \{ y \colon b \triangleq c,  Z \colon a \triangleq \varepsilon \} \spsys \{a \# y,\, b \# Z, c \# Z\} \spsys 
\\& \{ X \mapsto f(Z, \swap{a}{b}\permef Z, y, \swap{a}{b}\permef y), \ldots \}
\\
S_7   := {} & \emptyset \spsys \{ y \colon a \triangleq c,  Y \colon \varepsilon \triangleq c,  Z \colon b \triangleq \varepsilon \} \spsys
\\&  \{b \# y,\, a \# Y, b \# Y,\, a \# Z, c \# Z\} \spsys
\\& \{ X \mapsto f(y, Y, Z, Z, \swap{a}{b}\permef Z), \ldots \}
\\
\dots\quad & \text{all positional permutations of } y, Y, Z, Z, Z \; \dots
\\
S_{27}   := {} & \emptyset \spsys \{ Y \colon \varepsilon \triangleq c,  Z \colon b \triangleq \varepsilon \} \spsys \{a \# Y, b \# Y,\, a \# Z, c \# Z\} \spsys
\\& \{ X \mapsto f(Y, Y, \swap{a}{b}\permef Z, Z, Z, \swap{a}{b}\permef Z), \ldots \}
\\
\dots\quad & \text{all positional permutations of } Y, Y, Z, Z, Z, Z \; \dots
\end{align*}
There are ${4 \choose 2}+{5 \choose 1}{4 \choose 1}+{6 \choose 2}=41$ lggs. Note that $S_{7}$ is an lgg because of the freshness constraints.
Without them, one could take, for instance $\sigma=\{Z\mapsto\varepsilon, \, Y\mapsto(Z,Z,y)\}$ to obtain the generalization of $S_3$. However, the assignment $Y\mapsto(Z,Z,y)$ does not respect $\{a \# Y, b \# Y\}$.
\end{example}

Example \ref{ex:many-lggs} illustrates that the set of lggs might become very large in the general case. One of the reasons is that hedge suspensions appear consecutively and only represent single terms. This counterintuitive behavior is addressed in Section~\ref{sec:au-rigid}.

\begin{rem}
    Using binders helps identify similarities modulo bound atom names, which improves the precision of clone detection. However, this can sometimes backfire when the terms to be generalized have different abstraction lengths. For example, compare the generalization $\np{\{a\#x, c\#x, d\#x\}}{a.b.f(a,x)}$ of $a.b.f(a,b)$ and $c.d.f(c,e)$ (the same abstraction lengths) to the generalization $\np{\{b\#x, c\#x\}}{a.x}$ of terms $a.b.f(a,b)$ and $c.f(c,e)$ (different abstraction lengths). The second one does not provide information about the structural similarity. 

Such mismatches can occur when the procedures being compared have different numbers of arguments or local variables. In these cases, if we still wish to generalize the code fragments, we can use an alternative encoding that ``frees'' the previously bound atoms (or some of them, making the number of abstractions equal). For instance, by encoding the original terms as $f(a,b)$ and $f(c,d)$ instead, we obtain the generalization $\np{\{b\#x, d\#x\}}{f(x, \swap{d}{c}\swap{b}{a}\permef x}$, which captures more structural similarity between the fragments.
\end{rem}

\section{Rigid Variadic Nominal Anti-Unification ($\unaur$)}\label{sec:au-rigid}
This section extends the idea of rigid generalizations from~\cite{DBLP:journals/jar/KutsiaLV14} to the nominal setting where bound atoms and permutations become involved.

Let $w_1$ and $w_2$ be sequences of symbols (words). The sequence $\alignsym_1\alignpos{i_1}{j_1}\cdots\alignsym_n\alignpos{i_n}{j_n}$, $n\geq 0$, is an \emph{alignment} of $w_1$ and $w_2$ if
\begin{itemize}
\item $0 < i_1 < \dots < i_n \leq \len{w_1}$ and $0 < j_1 < \dots < j_n \leq \len{w_2}$, and
\item $\alignsym_k=w_1|_{i_k}=w_2|_{j_k}$ and $\alignsym_k\notin \pmb{x}\cup\pmb{X}$, for all $1\leq k\leq n$.
\end{itemize}
A \emph{rigidity function}, denoted by $\R$, is a function that computes a set of alignments of any pair of words.
The function $\hseq$ returns the word of head symbols of a given hedge or individual term.
For instance, $\hseq(a,\: a,\: b,\: b.g(b),\: c,\: g(a),\: a.g(a),\: c)=aab.cg.c$.
A common choice for $\R$ is simply computing the set of all longest common subsequence alignments (see~\cite{DBLP:journals/jar/KutsiaLV14}), denoted by $\Rlcs$. An example that illustrates such a rigidity function is $\R(ab.dax,\: b.adx)=\{ b\alignpos{2}{1}.\alignpos{3}{2}a\alignpos{5}{3},\: b\alignpos{2}{1}.\alignpos{3}{2}d\alignpos{4}{4}\}$. Note that $x$ is not allowed to be part of an alignment.

\begin{definition}
	\label{def:r-generalization}
	Given a rigidity function $\R$ and two $A$-based terms-in-context $\np{\s}{\nabla}$ and $\np{\q}{\nabla'}$. An $A$-based generalization $\np{\Gamma}{\w}$ of $\np{\nabla}{\s}$ and $\np{\nabla'}{\q}$ is an $A$-based \emph{$\R$-generalization} if either $\R(\hseq(\s),\hseq(\q))=\emptyset$ and $\w$ is a
	hedge variable, or there exists an alignment
	$\alignsym_1\alignpos{i_1}{j_1}\cdots\alignsym_n\alignpos{i_n}{j_n}\in
	\R(\hseq(\s),\hseq(\q))$ such that the following conditions are
	fulfilled:
	\begin{enumerate}
		\item The term or hedge $\w$ does not contain pairs of consecutive suspensions. I.e., substrings of the form $\pi\permef\x,\,\pi'\permef\x'$ where $\x,\x'\in \pmb x\cup  \pmb X$
		are not allowed to appear in $\w$.
		\item If all the suspensions from the (possibly singleton or empty) hedge $\w$ are removed, then a hedge of $n$ terms $t_1,\dots,t_n$ remains, such that $\head{t_i}=\alignsym_i$, $1\leq k\leq n$.
		\item For every $1\le k\le n$, if $t_k$ is an application $t_k=\alignsym_k(\w_k)$, there exists a pair of
		hedges $\s_k$ and $\q_k$ such that
		$\s|_{i_k}=\alignsym_k(\s_k)$, $\q|_{j_k}=\alignsym_k(\q_k)$ and
		$\np{\Gamma}{\w_k}$ is an $A$-based $\R$\nobreakdash-generalization of $\np{\nabla}{\s_k}$ and $\np{\nabla'}{\q_k}$.
		\item For every $1\le k\le n$, if $t_k$ is an abstraction $t_k=a.\w_k$, there exists a pair of
		terms $\s_k$ and $\q_k$ such that
		$\s|_{i_k}=b.\s_k$, $\q|_{j_k}=b'.\q_k$ and
		$\np{\Gamma}{\w_k}$ is an $A$-based $\R$\nobreakdash-generalization of $\np{\nabla}{\swap{a}{b}\permef\s_k}$ and $\np{\nabla'}{\swap{a}{b'}\permef\q_k}$.
	\end{enumerate}
\end{definition}

As usual, an $A$-based $\R$\nobreakdash-lgg is an $A$-based $\R$\nobreakdash-generalization $G$ of two terms-in-context $T$ and $U$ such that there is no other $A$-based $\R$\nobreakdash-generalization of $T$ and $U$ that is strictly less general than $G$.

The rigid anti-unification problem is parametric in a finite set of atoms $A\subset \pmb{A}$ and a rigidity function $\R$:
\begin{description}
	\item[Given:]
	A rigidity function $\R$ and two $A$-based terms-in-context $\np{\nabla}{\w_1}$ and $\np{\nabla}{\w_2}$.
	\item[Find:]
	An $A$-based $R$-lgg of $\np{\nabla}{\w_1}$ and $\np{\nabla}{\w_2}$.
\end{description}

Given some number $i$, we denote by $i^\incmt$ and $i^\decmt$, respectively, the number $i+1$ and $i-1$.
The rules of $\unaur$ are obtained by replacing the rules \irule{Dec-H} and \irule{Sol-H} from Section \ref{sec:au-general} by the following ones:

\infrule{{Dec-R}}{Rigid Hedge Decomposition}
{
	\{
	X\colon \tilde s \triangleq \tilde q
	\} \dotcup P \spsys S \spsys \Gamma \spsys \sigma
	\Lra
		\{
	Z_0\colon
	\tilde s|_1^{i_1^\decmt}
	\triangleq \tilde q|_1^{j_1^\decmt}
	\}
	\\[1mm]
	\cup
	\{
	Y_k\colon \tilde s|_{i_k} \triangleq \tilde q|_{j_k} \mid 1\leq k\leq n
	\}
	\cup 
	\{
	Z_n\colon
	\tilde s|_{i_{n}^\incmt}^{\len{\tilde s}}
	\triangleq \tilde q|_{j_{n}^\incmt}^{\len{\tilde q}}
	\}
	\\
	\cup 
	\{
	Z_k\colon \tilde s|_{i_{k}^\incmt}^{i_{k+1}^\decmt} \triangleq \tilde q|_{j_{k}^\incmt}^{j_{k+1}^\decmt} \mid 1\leq k\leq n-1
	\} 
	\cup P \spsys S \spsys \\[1mm]
	\Gamma \spsys \sigma\{X\mapsto Z_0,Y_1,Z_1,\dots,Y_n,Z_n \}
	,}
[{where at least one of the hedges $\tilde s$, $\tilde q$ is not a singleton,
	and $\alignsym_1\alignpos{i_1}{j_1}\cdots\alignsym_n\alignpos{i_n}{j_n}$
	is a non-empty alignment from $\R(\hseq(\tilde s),\hseq(\tilde q))$.}]

\infrule{{Sol-R}}{Rigid Hedge Solving}
{
  \{
X\colon \tilde s \triangleq \tilde q
 \} \dotcup P \spsys S \spsys \Gamma \spsys \sigma
 \Lra
 P \spsys \{
X\colon \tilde s \triangleq \tilde q
 \} \cup S \spsys  \{ a\#X \mid a\in A, \nabla \vdash a\#r_1 \,\wedge\, \nabla \vdash a\#r_2\} \cup \Gamma \spsys \sigma
,}
[{if none of the other rules is applicable to $X\colon \tilde s \triangleq \tilde q$.
}]\vspace{3mm}

The algorithm $\unaur$ works in the same manner as $\unau$, i.e., the modified set of rules is applied exhaustively in all possible ways. \irule{Dec-R} involves branching. The algorithm is obviously terminating. The proof is equivalent to the one of Theorem \ref{thm:termination}, even the measure $\mu$ introduced there can be used as is. Soundness follows from the fact that the term decomposition is governed by the rigidity function. Completeness is a bit more involved and can be shown by induction on the structure of the given rigid generalization.

\begin{example}
	We revisit Example \ref{ex:simple-two-lggs} and consider the rigitidy function $\Rlcs$.
	Since consecutive suspensions are not allowed, $\np{\emptyset}{c.f(a,c)}$ and $\np{\emptyset}{b.f(b,c)}$ have only one $\{a,b,c\}$-based $\Rlcs$-lgg,
	namely $\np{\{b\#V, c\#V, a\#W, b\#W\}}{b.f(V,b,W)}$.
\end{example}

\begin{example}\label{ex:clone-detection-rigid}
	Consider the code clone example (Fig. \ref{fig:sw-clone-hedges}), just like in Example \ref{ex:clone-appl-unau}, and the rigidity function $\Rlcs$.
	Instead of consecutively applying \irule{Dec-H}, we have to apply 
	\irule{Dec-R}. The number of alternatives get drastically reduced and we get only
 two $A$-based $\Rlcs$-lggs: $\np{\nabla}{u}$ and $\np{\nabla'}{u'}$ where
\begin{center}
\setlength{\tabcolsep}{3pt}\normalfont
\begin{tabular}{rl}
$u=$&$a.s.p.$sumProd(input(type(int), $a$), returnType(void),\\
	&\qquad  {=}(type(x), $s$, 0.0),\\
	&\qquad  {=}(type(x), $p$, 1.0),\\
	&\qquad  $i.$for({=}(type(int),$i$,1), {$\leq$}($i,a$), \incmt($i$),\\
	&\qquad\qquad {=}($s$, +($s,i$)),\\
	&\qquad\qquad $X$,\\
	&\qquad\qquad foo($Y,s,p,\swap{a}{i}\permef Y$)))\\
$\nabla=$&$\{b\#x\mid b\in A\}\cup\{b\#X\mid b\in A\setminus\{p,i\}\}$\\
	&$\cup\{b\#Y\mid b\in A\setminus\{i\}\}$
\end{tabular}

\smallskip
\begin{tabular}{rl}
$u'=$&$a.s.p.$sumProd(input(type(int), $a$), returnType(void),\\
	&\qquad  {=}(type(x), $s$, 0.0),\\
	&\qquad  {=}(type(x), $p$, 1.0),\\
	&\qquad  $i.$for({=}(type(int),$i$,1), {$\leq$}($i,a$), \incmt($i$),\\
	&\qquad\qquad $X$,\\
	&\qquad\qquad {=}($y$, $z$),\\
	&\qquad\qquad foo($Y,s,p,\swap{a}{i}\permef Y$)))\\
$\nabla'=$&$\{b\#x\mid b\in A\}\cup\{b\#X\mid b\in A\setminus\{s,i\}\}$\\
	&$\cup\{b\#y\mid b\in A\setminus\{s,p\}\}\cup\{b\#z\mid b\in A\setminus\{s,p,i\}\}$\\
	&$\cup\{b\#Y\mid b\in A\setminus\{i\}\}$
\end{tabular}
\end{center}
Hence, $\np{\nabla}{u}$ is the same answer as in Example \ref{ex:clone-appl-unau}. $\np{\nabla'}{u'}$ also provides interesting information, namely, that the missing part in one fragment can be aligned with another part while still retaining information about some similarity, although this relation between codes is not as precise as in the first answer.

\end{example}

Rigidity functions introduce a flexible mechanism for controlling similarity. For example, one can define a rigidity function that returns longest common subsequences above a specified minimal length, filtering out code fragments that are too short to be considered meaningful clones. The rigidity function can be also made to return singleton sets, making the algorithm deterministic and eliminating branching. Since rigidity functions operate on symbol sequences, they can rely on efficient string-based algorithms for their computation, offering the advantage of combining the speed of string-based techniques with the precision of tree-based methods.

Moreover, rigidity functions make it easy to incorporate approximate, quantitative string-based methods into the computation of similarity. This can significantly enhance the al\-gorithm's practical applicability, as it allows the detection of identifiers with similar, though not identical, names, which are often overlooked by purely syntactic approaches. The simi\-larity information between such names can be learned using AI techniques, such as embedding-based models, learned edit distances, or character-level neural networks trained on large codebases. This opens the door to hybrid systems that combine symbolic structure-aware reasoning with data-driven name similarity, improving both precision and recall in clone detection.

\section{Allowing Consecutive Individual Variables}\label{sec:au-x}

Rigid generalizations prohibit adjacent variables, which helps reduce the search space. However, there are cases where allowing variables to appear next to each other does not introduce significant computational overhead and can, in fact, improve the precision of the resulting generalizations. This applies in particular when the hedges abstracted by a hedge variable have exactly the same length. In such cases, the hedge variable can be replaced with a sequence of individual variables, one for each element of the hedge, enabling a more fine-grained generalization.

To formally define rigid generalizations where consecutive individual variables are permitted, we need to slightly adjust Definition \ref{def:r-generalization}, item 1).
\begin{definition}\label{def:rxx-generalization}
Let $\R$ be a rigidity function and consider two $A$-based terms-in-context $\np{\s}{\nabla}$ and $\np{\q}{\nabla'}$. An $A$-based generalization $\np{\Gamma}{\w}$ of $\np{\nabla}{\s}$ and $\np{\nabla'}{\q}$ is an $A$-based \emph{$\R$-generalization} if either $\R(\hseq(\s),\hseq(\q))=\emptyset$ and $\w$ is a
hedge variable, or there exists an alignment
$\alignsym_1\alignpos{i_1}{j_1}\cdots\alignsym_n\alignpos{i_n}{j_n}\in
\R(\hseq(\s),\hseq(\q))$ such that the following conditions are
fulfilled:
\begin{enumerate}
	\item The term or hedge $\w$ does not contain pairs of consecutive suspensions where some hedge variable is involved. I.e., substrings of the form $\pi\permef\x,\,\pi'\permef\x'$ where either $\x\in \pmb X$ or $\x'\in \pmb X$ are not allowed to appear in $\w$.
	On the other hand, consecutive suspensions of the form $\pi_1\permef x_1,\dots,\pi_n\permef x_n$, where $x_1,\dots,x_n\in \pmb x$ are allowed.
	\item Same as in Def. \ref{def:r-generalization}.
	\item Same as in Def. \ref{def:r-generalization}.
	\item Same as in Def. \ref{def:r-generalization}.
\end{enumerate}
\end{definition}

To compute rigid generalizations under such a modification, we need an extra rule that allows to decompose solved AUPs with the hedges of the same lengths:
    
\infrule{{Nar-H}}{Hedge Narrowing}
{
	P \spsys
	\{
	X\colon t_1\dots t_n \triangleq s_1\dots s_n
	\} \dotcup S \spsys \Gamma \spsys \sigma
	\Lra\\
	P \spsys \{
	x_1\colon t_1 \triangleq s_1,\dots, x_n\colon t_n \triangleq s_n
	\}\cup S \spsys
	\\
	\Gamma\{X\mapsto x_1,\dots, x_n\} \spsys \sigma\{X\mapsto x_1,\dots, x_n\}
	.}\vspace{3mm}

The other rules remain unchanged. 

\begin{example}
    Rigid generalization of $a.b.f(a,b)$ and $a.b.f(b,a)$ gives $a.b.f(X)$, where $X$ generalizes the hedges $a,b$ and $b,a$. If we allow consecutive individual variables, we get a more precise generalization $a.b.f(x, \swap{a}{b}\permef x)$, which shows not only the fact that $f$ is applied to two individual term arguments in both input, but also that the second argument is obtained from the first one by swapping $a$ and $b$.
\end{example}

\section{Future work}
\label{sect:future}

Variadic nominal anti-unification considered in this paper works in the first-order setting. A drawback of first-order anti-unification methods is that they overlook similarities between two terms if their heads are different. If one takes a large term $t$ and puts it under two different symbols $f$ and $g$ as $f(t)$ and $g(t)$, their generalization will be just some variable $x$ and the common part under the head symbols will be ignored. A~way to address this problem is to bring higher-order variables in the language. The simplest such extension would be the introduction of function variables that stand only for function symbols. Rigid anti-unification, extended by such variables, can use a modified \irule{Sol-T} rule, which does not immediately send AUPs like $f(t) \triangleq g(t)$ to the store, but analyzes the arguments below $f$ and $g$. If their similarity is above a predefined threshold (e.g., the alignment computed by rigidity function is ``long enough'') , then $f$ and $g$ are generalized by some function variable and the process continues with their arguments. This is a relatively light-weight extension of variadic nominal generalization, whose formal properties and possible applications in program analysis are left for future work.

\section{Conclusion}\label{sec:conc}

We have presented a framework for variadic nominal anti-unification, designed to detect structural similarity in tree-like expressions, with a particular focus on software code clone detection. Our approach generalizes syntactic structures while preserving the semantics of bound variables, relying on the expressive capabilities of nominal terms and variadic syntax.

We introduced a general anti-unification algorithm and established its termination, soundness, and completeness. To address practical challenges such as computational overhead and limited control over generalization precision, we proposed an extension based on rigidity functions. This refinement enhances efficiency and offers greater flexibility, enabling the algorithm to be tuned for specific applications. In particular, rigidity functions support hybrid approaches that combine the speed of string-based similarity techniques with the structural precision of tree-based methods. Their design also facilitates the integration of approximate and learnable similarity measures, allowing for the incorporation of AI-driven techniques.

The framework further supports extensions that enable fine-grained correspondence by replacing hedge variables with sequences of individual variables when the underlying hedges are of equal length, increasing the precision of generalizations. Additionally, we discussed a potential refinement involving function variables, which would allow the detection of similarities that occur under different function symbols, broadening the algorithm's applicability to more diverse code patterns.
\bibliographystyle{IEEEtran}
\bibliography{antiunif}

\begin{thebibliography}{10}
\providecommand{\url}[1]{#1}
\csname url@samestyle\endcsname
\providecommand{\newblock}{\relax}
\providecommand{\bibinfo}[2]{#2}
\providecommand{\BIBentrySTDinterwordspacing}{\spaceskip=0pt\relax}
\providecommand{\BIBentryALTinterwordstretchfactor}{4}
\providecommand{\BIBentryALTinterwordspacing}{\spaceskip=\fontdimen2\font plus
\BIBentryALTinterwordstretchfactor\fontdimen3\font minus
  \fontdimen4\font\relax}
\providecommand{\BIBforeignlanguage}[2]{{%
\expandafter\ifx\csname l@#1\endcsname\relax
\typeout{** WARNING: IEEEtran.bst: No hyphenation pattern has been}%
\typeout{** loaded for the language `#1'. Using the pattern for}%
\typeout{** the default language instead.}%
\else
\language=\csname l@#1\endcsname
\fi
#2}}
\providecommand{\BIBdecl}{\relax}
\BIBdecl

\bibitem{Plotkin70}
G.~D. Plotkin, ``A note on inductive generalization,'' \emph{Machine Intell.},
  vol.~5, no.~1, pp. 153--163, 1970.

\bibitem{Reynolds70}
J.~C. Reynolds, ``Transformational systems and the algebraic structure of
  atomic formulas,'' \emph{Machine Intell.}, vol.~5, no.~1, pp. 135--151, 1970.

\bibitem{DBLP:conf/ijcai/CernaK23}
\BIBentryALTinterwordspacing
D.~M. Cerna and T.~Kutsia, ``Anti-unification and generalization: {A} survey,''
  in \emph{Proceedings of the Thirty-Second International Joint Conference on
  Artificial Intelligence, {IJCAI} 2023, 19th-25th August 2023, Macao, SAR,
  China}.\hskip 1em plus 0.5em minus 0.4em\relax ijcai.org, 2023, pp.
  6563--6573. [Online]. Available:
  \url{https://doi.org/10.24963/ijcai.2023/736}
\BIBentrySTDinterwordspacing

\bibitem{DBLP:journals/pacmpl/LiZDZW24}
\BIBentryALTinterwordspacing
X.~Li, X.~Zhou, R.~Dong, Y.~Zhang, and X.~Wang, ``Efficient bottom-up synthesis
  for programs with local variables,'' \emph{Proc. {ACM} Program. Lang.},
  vol.~8, no. {POPL}, pp. 1540--1568, 2024. [Online]. Available:
  \url{https://doi.org/10.1145/3632894}
\BIBentrySTDinterwordspacing

\bibitem{DBLP:conf/icse/ZhengS24}
\BIBentryALTinterwordspacing
D.~Zheng and K.~Sen, ``Dynamic inference of likely symbolic tensor shapes in
  python machine learning programs,'' in \emph{Proceedings of the 46th
  International Conference on Software Engineering: Software Engineering in
  Practice, {ICSE-SEIP} 2024, Lisbon, Portugal, April 14-20, 2024}.\hskip 1em
  plus 0.5em minus 0.4em\relax {ACM}, 2024, pp. 147--156. [Online]. Available:
  \url{https://doi.org/10.1145/3639477.3639718}
\BIBentrySTDinterwordspacing

\bibitem{DBLP:journals/pacmpl/DAntoniDGRRS24}
\BIBentryALTinterwordspacing
L.~D'Antoni, S.~Ding, A.~Goel, M.~Ramesh, N.~Rungta, and C.~Sung,
  ``Automatically reducing privilege for access control policies,'' \emph{Proc.
  {ACM} Program. Lang.}, vol.~8, no. {OOPSLA2}, pp. 763--790, 2024. [Online].
  Available: \url{https://doi.org/10.1145/3689738}
\BIBentrySTDinterwordspacing

\bibitem{DBLP:conf/icse/NongFYZL0C24}
\BIBentryALTinterwordspacing
Y.~Nong, R.~Fang, G.~Yi, K.~Zhao, X.~Luo, F.~Chen, and H.~Cai, ``{VGX:}
  large-scale sample generation for boosting learning-based software
  vulnerability analyses,'' in \emph{Proceedings of the 46th {IEEE/ACM}
  International Conference on Software Engineering, {ICSE} 2024, Lisbon,
  Portugal, April 14-20, 2024}.\hskip 1em plus 0.5em minus 0.4em\relax {ACM},
  2024, pp. 149:1--149:13. [Online]. Available:
  \url{https://doi.org/10.1145/3597503.3639116}
\BIBentrySTDinterwordspacing

\bibitem{DBLP:journals/nature/RulePCEKT24}
\BIBentryALTinterwordspacing
J.~S. Rule, S.~T. Piantadosi, A.~Cropper, K.~Ellis, M.~Nye, and J.~B.
  Tenenbaum, ``Symbolic metaprogram search improves learning efficiency and
  explains rule learning in humans,'' \emph{Nature Communications}, vol.~15,
  no.~1, p. 6847, 2024. [Online]. Available:
  \url{https://doi.org/10.1038/s41467-024-50966-x}
\BIBentrySTDinterwordspacing

\bibitem{DBLP:journals/pacmpl/BaderSPC19}
\BIBentryALTinterwordspacing
J.~Bader, A.~Scott, M.~Pradel, and S.~Chandra, ``Getafix: Learning to fix bugs
  automatically,'' \emph{Proc. ACM Program. Lang.}, vol.~3, no. OOPSLA, pp.
  159:1--159:27, 2019. [Online]. Available:
  \url{https://doi.org/10.1145/3360585}
\BIBentrySTDinterwordspacing

\bibitem{DBLP:conf/nsdi/MehtaBKMBABABK20}
\BIBentryALTinterwordspacing
S.~Mehta, R.~Bhagwan, R.~Kumar, C.~Bansal, C.~Maddila, B.~Ashok, S.~Asthana,
  C.~Bird, and A.~Kumar, ``Rex: Preventing bugs and misconfiguration in large
  services using correlated change analysis,'' in \emph{17th USENIX Symposium
  on Networked Systems Design and Implementation (NSDI 20)}.\hskip 1em plus
  0.5em minus 0.4em\relax Santa Clara, CA: USENIX Association, Feb. 2020, pp.
  435--448. [Online]. Available:
  \url{https://www.usenix.org/conference/nsdi20/presentation/mehta}
\BIBentrySTDinterwordspacing

\bibitem{DBLP:conf/sigsoft/WinterNBCHHWKWM22}
\BIBentryALTinterwordspacing
E.~R. Winter, V.~Nowack, D.~Bowes, S.~Counsell, T.~Hall, S.~{\'{O}}.
  Haraldsson, J.~R. Woodward, S.~Kirbas, E.~Windels, O.~McBello,
  A.~Atakishiyev, K.~Kells, and M.~W. Pagano, ``Towards developer-centered
  automatic program repair: findings from {Bloomberg},'' in \emph{Proceedings
  of the 30th {ACM} Joint European Software Engineering Conference and
  Symposium on the Foundations of Software Engineering, {ESEC/FSE} 2022,
  Singapore, Singapore, November 14-18, 2022}, A.~Roychoudhury, C.~Cadar, and
  M.~Kim, Eds.\hskip 1em plus 0.5em minus 0.4em\relax {ACM}, 2022, pp.
  1578--1588. [Online]. Available:
  \url{https://doi.org/10.1145/3540250.3558953}
\BIBentrySTDinterwordspacing

\bibitem{DBLP:conf/sbes/SousaSGBD21}
\BIBentryALTinterwordspacing
R.~{Rolim de Sousa}, G.~Soares, R.~Gheyi, T.~Barik, and L.~D'Antoni, ``Learning
  quick fixes from code repositories,'' in \emph{35th Brazilian Symposium on
  Software Engineering, {SBES} 2021, Joinville, Santa Catarina, Brazil, 27
  September 2021 - 1 October 2021}, C.~D. Vasconcellos, K.~G. Roggia,
  V.~Collere, and P.~Bousfield, Eds.\hskip 1em plus 0.5em minus 0.4em\relax
  {ACM}, 2021, pp. 74--83. [Online]. Available:
  \url{https://doi.org/10.1145/3474624.3474650}
\BIBentrySTDinterwordspacing

\bibitem{DBLP:conf/iwsc/YernauxV22}
\BIBentryALTinterwordspacing
G.~Yernaux and W.~Vanhoof, ``On detecting semantic clones in constraint logic
  programs,'' in \emph{16th {IEEE} International Workshop on Software Clones,
  {IWSC} 2022, Limassol, Cyprus, October 2, 2022}.\hskip 1em plus 0.5em minus
  0.4em\relax {IEEE}, 2022, pp. 32--38. [Online]. Available:
  \url{https://doi.org/10.1109/IWSC55060.2022.00014}
\BIBentrySTDinterwordspacing

\bibitem{DBLP:conf/ershov/BulychevKZ09}
\BIBentryALTinterwordspacing
P.~E. Bulychev, E.~V. Kostylev, and V.~A. Zakharov, ``Anti-unification
  algorithms and their applications in program analysis,'' in
  \emph{Perspectives of Systems Informatics, 7th International Andrei Ershov
  Memorial Conference, {PSI} 2009, Novosibirsk, Russia, June 15-19, 2009.
  Revised Papers}, ser. Lecture Notes in Computer Science, A.~Pnueli, I.~B.
  Virbitskaite, and A.~Voronkov, Eds., vol. 5947.\hskip 1em plus 0.5em minus
  0.4em\relax Springer, 2009, pp. 413--423. [Online]. Available:
  \url{https://doi.org/10.1007/978-3-642-11486-1\_35}
\BIBentrySTDinterwordspacing

\bibitem{DBLP:journals/programming/ThompsonLS17}
\BIBentryALTinterwordspacing
S.~J. Thompson, H.~Li, and A.~Schumacher, ``The pragmatics of clone detection
  and elimination,'' \emph{Art Sci. Eng. Program.}, vol.~1, no.~2, p.~8, 2017.
  [Online]. Available:
  \url{https://doi.org/10.22152/programming-journal.org/2017/1/8}
\BIBentrySTDinterwordspacing

\bibitem{gabbay}
\BIBentryALTinterwordspacing
M.~J. Gabbay, ``A theory of inductive definitions with alpha-equivalence,''
  Ph.D. dissertation, University of Cambridge, UK, 2000. [Online]. Available:
  \url{https://gabbay.org.uk/papers/thesis.pdf}
\BIBentrySTDinterwordspacing

\bibitem{DBLP:conf/lics/GabbayP99}
\BIBentryALTinterwordspacing
M.~Gabbay and A.~M. Pitts, ``A new approach to abstract syntax involving
  binders,'' in \emph{14th Annual {IEEE} Symposium on Logic in Computer
  Science, Trento, Italy, July 2-5, 1999}.\hskip 1em plus 0.5em minus
  0.4em\relax {IEEE} Computer Society, 1999, pp. 214--224. [Online]. Available:
  \url{https://doi.org/10.1109/LICS.1999.782617}
\BIBentrySTDinterwordspacing

\bibitem{DBLP:conf/fscd/Schmidt-Schauss22}
\BIBentryALTinterwordspacing
M.~Schmidt{-}Schau{\ss} and D.~Nantes{-}Sobrinho, ``Nominal anti-unification
  with atom-variables,'' in \emph{7th International Conference on Formal
  Structures for Computation and Deduction, {FSCD} 2022, August 2-5, 2022,
  Haifa, Israel}, ser. LIPIcs, A.~P. Felty, Ed., vol. 228.\hskip 1em plus 0.5em
  minus 0.4em\relax Schloss Dagstuhl - Leibniz-Zentrum f{\"{u}}r Informatik,
  2022, pp. 7:1--7:22. [Online]. Available:
  \url{https://doi.org/10.4230/LIPIcs.FSCD.2022.7}
\BIBentrySTDinterwordspacing

\bibitem{DBLP:conf/unif/BaumgartnerN20}
\BIBentryALTinterwordspacing
A.~Baumgartner and D.~Nantes{-}Sobrinho, ``{A}, {C}, and {AC} nominal
  anti-unification,'' in \emph{Proceedings of the 34th International Workshop
  on Unification, {UNIF} 2020, Linz, Austria, June 29, 2020}, T.~Kutsia and
  A.~M. Marshall, Eds., 2020, pp. 5:1--5:6. [Online]. Available:
  \url{http://www3.risc.jku.at/publications/download/risc\_6129/proceedings-UNIF2020.pdf\#page=27}
\BIBentrySTDinterwordspacing

\bibitem{DBLP:conf/rta/BaumgartnerKLV15}
\BIBentryALTinterwordspacing
A.~Baumgartner, T.~Kutsia, J.~Levy, and M.~Villaret, ``Nominal
  anti-unification,'' in \emph{26th International Conference on Rewriting
  Techniques and Applications, {RTA} 2015, June 29 to July 1, Warsaw, Poland},
  ser. LIPIcs, M.~Fern{\'{a}}ndez, Ed., vol.~36.\hskip 1em plus 0.5em minus
  0.4em\relax Schloss Dagstuhl - Leibniz-Zentrum fuer Informatik, 2015, pp.
  57--73. [Online]. Available:
  \url{http://dx.doi.org/10.4230/LIPIcs.RTA.2015.57}
\BIBentrySTDinterwordspacing

\bibitem{DBLP:journals/jar/KutsiaLV14}
\BIBentryALTinterwordspacing
T.~Kutsia, J.~Levy, and M.~Villaret, ``Anti-unification for unranked terms and
  hedges,'' \emph{J. Autom. Reason.}, vol.~52, no.~2, pp. 155--190, 2014.
  [Online]. Available: \url{https://doi.org/10.1007/s10817-013-9285-6}
\BIBentrySTDinterwordspacing

\bibitem{DBLP:journals/iandc/BaumgartnerK17}
\BIBentryALTinterwordspacing
A.~Baumgartner and T.~Kutsia, ``Unranked second-order anti-unification,''
  \emph{Inf. Comput.}, vol. 255, pp. 262--286, 2017. [Online]. Available:
  \url{https://doi.org/10.1016/j.ic.2017.01.005}
\BIBentrySTDinterwordspacing

\bibitem{DBLP:conf/jelia/BaumgartnerK14}
\BIBentryALTinterwordspacing
------, ``A library of anti-unification algorithms,'' in \emph{Logics in
  Artificial Intelligence - 14th European Conference, {JELIA} 2014, Funchal,
  Madeira, Portugal, September 24-26, 2014. Proceedings}, ser. Lecture Notes in
  Computer Science, E.~Ferm{\'{e}} and J.~Leite, Eds., vol. 8761.\hskip 1em
  plus 0.5em minus 0.4em\relax Springer, 2014, pp. 543--557. [Online].
  Available: \url{https://doi.org/10.1007/978-3-319-11558-0\_38}
\BIBentrySTDinterwordspacing

\bibitem{RISC5180}
\BIBentryALTinterwordspacing
A.~Baumgartner, ``\BIBforeignlanguage{english}{{Anti-Unification Algorithms:
  Design, Analysis, and Implementation}},'' Ph.D. dissertation, RISC, JKU Linz,
  September 2015. [Online]. Available:
  \url{http://www.risc.jku.at/publications/download/risc_5180/phd-thesis.pdf}
\BIBentrySTDinterwordspacing

\bibitem{DBLP:conf/ilp/YamamotoIIA01}
\BIBentryALTinterwordspacing
A.~Yamamoto, K.~Ito, A.~Ishino, and H.~Arimura, ``Modelling semi-structured
  documents with hedges for deduction and induction,'' in \emph{Inductive Logic
  Programming, 11th International Conference, {ILP} 2001, Strasbourg, France,
  September 9-11, 2001, Proceedings}, ser. Lecture Notes in Computer Science,
  C.~Rouveirol and M.~Sebag, Eds., vol. 2157.\hskip 1em plus 0.5em minus
  0.4em\relax Springer, 2001, pp. 240--247. [Online]. Available:
  \url{https://doi.org/10.1007/3-540-44797-0\_20}
\BIBentrySTDinterwordspacing

\bibitem{DBLP:conf/tbillc/DunduaKR19}
\BIBentryALTinterwordspacing
B.~Dundua, T.~Kutsia, and M.~Rukhaia, ``Unranked nominal unification,'' in
  \emph{Language, Logic, and Computation - 13th International Tbilisi
  Symposium, TbiLLC 2019, Batumi, Georgia, September 16-20, 2019, Revised
  Selected Papers}, ser. Lecture Notes in Computer Science,
  A.~{\"{O}}zg{\"{u}}n and Y.~Zinova, Eds., vol. 13206.\hskip 1em plus 0.5em
  minus 0.4em\relax Springer, 2019, pp. 279--296. [Online]. Available:
  \url{https://doi.org/10.1007/978-3-030-98479-3\_14}
\BIBentrySTDinterwordspacing

\bibitem{DBLP:journals/scp/RoyCK09}
\BIBentryALTinterwordspacing
C.~K. Roy, J.~R. Cordy, and R.~Koschke, ``Comparison and evaluation of code
  clone detection techniques and tools: {A} qualitative approach,'' \emph{Sci.
  Comput. Program.}, vol.~74, no.~7, pp. 470--495, 2009. [Online]. Available:
  \url{https://doi.org/10.1016/j.scico.2009.02.007}
\BIBentrySTDinterwordspacing

\bibitem{DBLP:journals/spe/Yang91}
\BIBentryALTinterwordspacing
W.~Yang, ``Identifying syntactic differences between two programs,''
  \emph{Softw. Pract. Exp.}, vol.~21, no.~7, pp. 739--755, 1991. [Online].
  Available: \url{https://doi.org/10.1002/spe.4380210706}
\BIBentrySTDinterwordspacing

\bibitem{DBLP:journals/sqj/EvansFM09}
\BIBentryALTinterwordspacing
W.~S. Evans, C.~W. Fraser, and F.~Ma, ``Clone detection via structural
  abstraction,'' \emph{Softw. Qual. J.}, vol.~17, no.~4, pp. 309--330, 2009.
  [Online]. Available: \url{https://doi.org/10.1007/s11219-009-9074-y}
\BIBentrySTDinterwordspacing

\bibitem{DBLP:conf/icsm/BaxterYMSB98}
\BIBentryALTinterwordspacing
I.~D. Baxter, A.~Yahin, L.~M. de~Moura, M.~Sant'Anna, and L.~Bier, ``Clone
  detection using abstract syntax trees,'' in \emph{1998 International
  Conference on Software Maintenance, {ICSM} 1998, Bethesda, Maryland, USA,
  November 16-19, 1998}.\hskip 1em plus 0.5em minus 0.4em\relax {IEEE} Computer
  Society, 1998, pp. 368--377. [Online]. Available:
  \url{https://doi.org/10.1109/ICSM.1998.738528}
\BIBentrySTDinterwordspacing

\bibitem{DBLP:conf/wcre/KoschkeFF06}
\BIBentryALTinterwordspacing
R.~Koschke, R.~Falke, and P.~Frenzel, ``Clone detection using abstract syntax
  suffix trees,'' in \emph{13th Working Conference on Reverse Engineering
  {(WCRE} 2006), 23-27 October 2006, Benevento, Italy}.\hskip 1em plus 0.5em
  minus 0.4em\relax {IEEE} Computer Society, 2006, pp. 253--262. [Online].
  Available: \url{https://doi.org/10.1109/WCRE.2006.18}
\BIBentrySTDinterwordspacing

\bibitem{DBLP:conf/padl/LiT10}
\BIBentryALTinterwordspacing
H.~Li and S.~J. Thompson, ``Similar code detection and elimination for erlang
  programs,'' in \emph{Practical Aspects of Declarative Languages, 12th
  International Symposium, {PADL} 2010, Madrid, Spain, January 18-19, 2010.
  Proceedings}, ser. Lecture Notes in Computer Science, M.~Carro and
  R.~Pe{\~{n}}a, Eds., vol. 5937.\hskip 1em plus 0.5em minus 0.4em\relax
  Springer, 2010, pp. 104--118. [Online]. Available:
  \url{https://doi.org/10.1007/978-3-642-11503-5\_10}
\BIBentrySTDinterwordspacing

\bibitem{DBLP:journals/tocl/LevyV12}
\BIBentryALTinterwordspacing
J.~Levy and M.~Villaret, ``Nominal unification from a higher-order
  perspective,'' \emph{{ACM} Trans. Comput. Log.}, vol.~13, no.~2, pp.
  10:1--10:31, 2012. [Online]. Available:
  \url{https://doi.org/10.1145/2159531.2159532}
\BIBentrySTDinterwordspacing

\end{thebibliography}
\end{document}